\newtheorem{Theorem}{Theorem}
\newtheorem{prop}{Proposition}
\newtheorem{coro}{Corollary}
\begin{document}

\title{$S$-Packing Colorings of Cubic Graphs}

\author {Nicolas Gastineau\footnote{Author partially supported by the Burgundy Council} and Olivier Togni \\\\
\textit{LE2I UMR6306, CNRS, Arts et Métiers,}\\ \textit{Univ. Bourgogne Franche-Comté, F-21000 Dijon, France}}
%\date{}

\maketitle

\begin{abstract}
Given a non-decreasing sequence $S=(s_1,s_2, \ldots, s_k)$ of positive integers, an {\em $S$-packing coloring} of a graph $G$ is a mapping $c$ from $V(G)$ to $\{s_1,s_2, \ldots, s_k\}$ such that any two vertices with the $i$th color are at mutual distance greater than $s_i$, $1\le i\le k$. This paper studies $S$-packing colorings of (sub)cubic graphs. We prove that subcubic graphs are $(1,2,2,2,2,2,2)$-packing colorable and $(1,1,2,2,2)$-packing colorable. For subdivisions of subcubic graphs we derive sharper bounds, and we provide an example of a cubic graph of order $38$ which is not $(1,2,\ldots,12)$-packing colorable. 

\paragraph{Keywords:} graph, coloring, packing chromatic number, cubic graph.
\end{abstract}

\section{Introduction}

A proper coloring of a graph $G$ is a mapping which associates a color (integer) to each vertex such that adjacent vertices get distinct colors. In such a coloring, the color classes are stable sets (1-packings). As an extension, a $d$-distance coloring of $G$ is a proper coloring of the $d$-th power $G^d$ of $G$, i.e. a partition of $V(G)$ into $d$-packings (sets of vertices at pairwise distance greater than $d$).
While Brook's theorem implies that all cubic graphs except the complete graph $K_4$ of order 4 are properly 3-colorable, many authors studied 2-distance colorings of cubic graphs.

The aim of this paper is to study a mixing of these two types of colorings, i.e. colorings of (sub)cubic graphs in which some colors classes are 1-packings while other are $d$-packings, $d\ge 2$.
Such colorings can be expressed using the notion of $S$-packing coloring. For a non-decreasing sequence $S=(s_1,s_2, \ldots, s_k)$ of positive integers, an {\em $S$-packing coloring} (or simply $S$-coloring) of a graph $G$ is a coloring of its vertices with colors from $\{s_1,s_2, \ldots, s_k\}$ such that any two vertices with the $i$th color are at mutual distance greater than $s_i$, $1\le i\le k$. The color class of each color $s_i$ is thus an $s_i$-packing. The graph $G$ is {\em $S$-colorable} if there exists an $S$-coloring and it is {\em $S$-chromatic} if it is $S$-colorable but not $S'$-colorable for any $S'=(s_1,s_2, \ldots, s_j)$ with $j<k$ (notice that Goddard et al.~\cite{GX12} define differently the $S$-chromaticness for infinite graphs).

A $(d,\ldots, d)$-coloring is thus a $d$-distance $k$-coloring, where $k$ is the number of $d$ (see~\cite{KK08} for a survey of results on this invariant) while a $(1,2,\ldots, d)$-coloring is a packing coloring. The packing chromatic number $\chi_{\rho}(G)$ of
$G$ is the integer $k$ for which $G$ is $(1,\ldots,k)$-chromatic. This parameter was introduced by Goddard et al.~\cite{GoBro} under the name of {\em broadcast chromatic number} and the authors showed that deciding whether $\chi_{\rho}(G)\leq 4$ is NP-hard. A series of works~\cite{BrePa,EkPa,FiPa,FiLa,GoBro,SoPa} considered the packing chromatic number of infinite grids.
For sequences $S$ other than $(1,2,...,k)$, $S$-packing colorings were considered more recently ~\cite{GKTsub,GXsub,GX12}. Other papers are about the complexity class of the decision problem associated to the $S$-packing coloring problem~\cite{FiCo,GasCo}.

% This parameter was introduced recently by Goddard et al.~\cite{GoBro} under the name of {\em broadcast chromatic number} and the authors showed that deciding 
% whether $\chi_{\rho}(G)\leq 4$ is NP-hard. Fiala and Golovach~\cite{FiCo} showed that the problem remains NP-complete for trees.
% Bre{\v{s}}ar et al.~\cite{BrePa} studied the problem on Cartesian products graphs, hexagonal lattice and trees, using the name of packing chromatic number.
% Other studies on this parameter mainly concern infinite graphs, with a natural question to be answered : 
% {\em is a given infinite graph has finite packing chromatic number ?}
% Goddard et al. answered by the positive for the infinite two dimensional square grid by showing $9\leq \chi_{\rho}\leq 23$. 
% The lower bound was later improved to $10$ by Fiala et al.~\cite{FiPa} and then to $12$ by Ekstein et al.~\cite{EkPa}.
% The upper bound was recently improved by Holub and Soukal~\cite{SoPa} to $17$.
% Fiala et al.~\cite{FiPa} showed that the infinite hexagonal grid has packing chromatic number 7; 
% while the infinite triangular lattice  along with the 3-dimensional square lattice was shown to admit
% no finite packing coloring by Finbow and Rall~\cite{FiLa}.
% Infinite product graphs were considered by Fiala et al.~\cite{FiPa} that showed that the product of a finite path (of order at least two) 
% by the 2-dimensional square grid has infinite packing chromatic number while the product of the infinite path by any finite graph has 
% finite packing chromatic number.

Regarding subcubic graphs, the packing chromatic number of the hexagonal lattice and of the infinite 3-regular tree is $7$ and at most $7$, respectively. Recently, Bre{\v{s}}ar et al.~\cite{BresKlav}, have proven that the packing chromatic number of some cubic graphs, namely the base-3 Sierpi{\'{n}}ski graphs, is bounded by $9$.
Goddard et al.~\cite{GoBro} asked what is the maximum of the packing chromatic number of a cubic graph of order $n$. 
%From Brook's theorem, we know that any cubic graph other than $K_4$ is $(1,1,1)$-colorable. 
For 2-distance coloring of cubic graphs, Cranston and Kim have recently shown~\cite{CK08} that any subcubic graph is $(2,2,2,2,2,2,2,2)$-colorable (they in fact proved a stronger statement for list coloring). For planar subcubic graphs $G$, there are also sharper results depending on the girth of $G$~\cite{BI11,CK08,Hav09}.
%: $G$ is (a) $(2,2,2,2,2,2,2)$-colorable if $g\ge 7$; (b) $(2,2,2,2,2,2)$-colorable if $g\ge 9$~\cite{CK08,Hav09}; $(2,2,2,2,2)$-colorable if $g\ge 13$~\cite{Hav09} and (c) $(2,2,2,2,2)$-colorable if $g\ge 23$~\cite{BI11}.
 % resultat de Thomassen sur la (2,2,2,2,2,2,2)-coloration des cubiques planaires Ã  vÃ©rifier

In this paper, we study $S$-packing colorings of subcubic graphs for various sequences $S$ starting with one or two `1'. We also compute the distribution of $S$-chromatic cubic graphs up to 20 vertices, for three sequences $S$. The corresponding results are reported on Tables 1, 2, and 3. They are obtained by an exhaustive search, using the lists of cubic graphs maintained by Gordon Royle~\cite{GoCu}. The paper is organized as follows: Section 2 is devoted to the study of $(1,k,\ldots,k)$-colorings of subcubic graphs for $k=2$ or $3$; Section 3 to $(1,1,2,\ldots)$-colorings; Section 4 to $(1,2,3,\ldots)$-colorings and Section 5 concludes the paper by listing some open problems.

\subsection{Notation}
To describe an $S$-coloring, if an integer $s$ is repeated in the sequence $S$, then we will denote the colors $s$ by $s_a,s_b,\ldots$.

The {\em subdivided graph} $S(G)$ of a (multi)graph $G$ is the graph obtained from $G$ by subdividing each edge once, i.e. replacing each edge by a path of length two. In $S(G)$, vertices of $G$ are called {\em original} vertices and other vertices are called {\em subdivision} vertices.
Let us call a graph {\em $d$-irregular} if it has no adjacent vertices of degree $d$. 
Notice that graphs obtained from subcubic graphs by subdividing each edge at least once are $3$-irregular graphs.

The following method (that is inspired from that of Cranston and Kim~\cite{CK08}) is used in the remainder of the paper to produce a desired coloring of a subcubic graph (except for Theorem \ref{p13}): for a graph $G$ and an edge $e=xy\in E(G)$, a {\em level ordering} of $(G,e)$ is a partition of $V(G)$ into levels $L_i=\{v\in V(G) : d(v,e)=i\}$, $0\le i \le
\epsilon(e)$, with $\epsilon(e)=\max(\{d(u,e),u\in V(G)\})\le diam(G)$. The vertices are then colored one by one, from level $\epsilon(e)$ to $1$, while preserving some properties. These properties are used at the end to allow to color the vertices $x$ and $y$ by recoloring possibly some vertices in their neighborhoods.

Two vertices $u$ and $v$ of $G$ are called \textit{siblings} if they are not adjacent, are on the same level $L_i$ for some $i\ge1$ and have a common neighbor in $L_{i-1}$.
A vertex in a level $i$ is a \textit{$k$-vertex} if it has $k$ neighbors in $L_{i}\cup L_{i+1}$. 
Notice that a 2-vertex in a subcubic graph has at most one sibling (see Figure~\ref{cousin}).
Given a (partial) coloring $c$ of $G$, let $C_1(u)=\{c(v) : uv\in E(G)\}$ and $C_2(u)=\{c(v) : d(u,v)=2, \mbox{ with }u,v \mbox{ not siblings} \}$. 
%\node at (-1,0) (l0) {$L_0$};
%\node at (-1,1) (l1) {$L_1$};
% \draw[->] (l0.east) -- +(3,0);

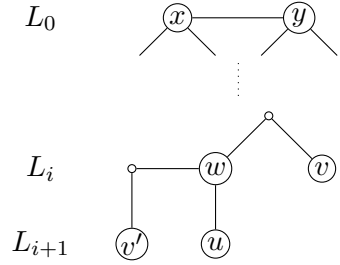
\begin{figure}
\begin{center}
 \begin{tikzpicture}
\tikzstyle{every node}=[draw,circle,fill=white,minimum size=3pt,inner sep=1pt]
\draw (-.2,0) node (x)  {$x$};
\draw (1.4,0) node (y) {$y$};
\draw (x) -- (y);
\draw[-] (x) -- +(0.5,-0.5);
\draw[-] (x)-- +(-0.5,-0.5);
\draw[-] (y) -- +(0.5,-0.5);
\draw[-] (y)-- +(-0.5,-0.5);
\draw[dotted] (0.6,-0.6) -- +(0,-0.5);
\draw (1,-1.3) node (r) {};
\draw (0.3,-2) node (w) {$w$};
\draw (1.7,-2) node (v) {$v$};
\draw (-0.8,-2) node (z) {};
\draw (-0.8,-3) node (vp) [inner sep=0pt]{$v'$};
\draw (0.3,-3) node (u) {$u$};
\draw (r) -- (v);
\draw (r) -- (w);
\draw (z) -- (w);
\draw (vp) -- (z);
\draw (w) -- (u);
\tikzstyle{every node}=[]
\node at (-2,0) (l0) {$L_0$};
\node at (-2,-2) (li) {$L_i$};
\node at (-2,-3) (lip) {$L_{i+1}$};
%\draw[dotted] (l0) -- +(5,0);
%\draw[dotted] (li) -- +(5,0);
%\draw[dotted] (lip) -- +(5,0);

 \end{tikzpicture}
\end{center}
 \caption{\label{cousin}Level ordering of a subcubic graph from the edge $xy$. The vertices $v,w$ are siblings but $u,v'$ are not siblings.}%The two possible configurations in order to have at least two cousins $u$ and $v$.}
\end{figure}

\section{$(1,k,\ldots,k)$-coloring}
In this section, $(1,k,\ldots , k)$-colorings of subcubic graphs are studied for $k=2$ or $3$.

\subsection{$(1,3,\ldots,3)$-coloring}

The following proposition is used to obtain an $S$-coloring of a subdivided graph:
\begin{prop}\label{p01}

Let $G$ be a graph and $S=(s_1,\ldots, s_k)$ be a non-decreasing sequence of integers.
If $G$ is $S$-colorable then $S(G)$ is $(1, 2s_1+1,\ldots, 2s_k+1)$-colorable.

\end{prop}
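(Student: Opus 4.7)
The plan is to define an explicit $(1, 2s_1+1, \ldots, 2s_k+1)$-coloring of $S(G)$ from a given $S$-coloring $c$ of $G$. Let me assign color $1$ to every subdivision vertex of $S(G)$, and to each original vertex $v$ assign the color $2s_i + 1$, where $s_i = c(v)$.

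Next, I would verify the two types of distance constraints. For color $1$, I need to check that any two vertices receiving color $1$ are at distance greater than $1$, i.e.\ non-adjacent. This is immediate because in $S(G)$ every edge joins an original vertex to a subdivision vertex, so the set of subdivision vertices forms an independent set. For color $2s_i + 1$, suppose two original vertices $u, v$ both receive this color. Then $c(u) = c(v) = s_i$ in $G$, so the $S$-coloring hypothesis gives $d_G(u,v) \ge s_i + 1$. The key distance fact is that distances between original vertices double under subdivision: $d_{S(G)}(u,v) = 2\, d_G(u,v)$. Hence $d_{S(G)}(u,v) \ge 2(s_i+1) = 2s_i + 2 > 2s_i + 1$, as required.

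There is essentially no obstacle here — the argument is a direct verification once one observes the distance-doubling property and the bipartite structure of $S(G)$ with subdivision vertices as one part. The only small point to be careful about is making sure the target sequence $(1, 2s_1+1, \ldots, 2s_k+1)$ is non-decreasing, which follows from $s_1 \le s_2 \le \cdots \le s_k$ and the fact that $2s_1 + 1 \ge 3 > 1$.
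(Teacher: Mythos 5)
Your proof is correct and follows essentially the same approach as the paper: color all subdivision vertices $1$ and transfer the original coloring to the original vertices, using the fact that distances between original vertices double under subdivision so that an $s_i$-packing becomes a $(2s_i+1)$-packing. (In fact your distance computation $d_{S(G)}(u,v)=2\,d_G(u,v)$ is the correct one; the paper's proof states $2d+1$, which is a harmless slip since the packing conclusion still follows.)
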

\begin{proof}
Let $c$ be an $S$-coloring of $G$. Every pair of vertices $u, v\in V(G)$ such that $d(u,v)=d$ become at distance $2d$ in $S(G)$. 
Therefore, every set of vertices in $V(G)$ forming an $i$-packing also forms a $(2i+1)$-packing in $S(G)$.
Using color 1 on subdivision vertices and using the coloring $c$ (considering the sequence differently) on original vertices, we obtain a $(1, 2s_1+1,\ldots, 2s_k+1)$-coloring of $S(G)$.
\end{proof}

\begin{coro}\label{cor1}
For every subcubic graph $G$, $S(G)$ is $(1,3,3,3)$-colorable.
\end{coro}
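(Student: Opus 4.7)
The natural approach is to apply Proposition~\ref{p01} with the choice $S=(1,1,1)$, for which $(1,\,2s_1+1,\,2s_2+1,\,2s_3+1)=(1,3,3,3)$. Under this choice, it suffices to verify that every subcubic graph $G$ is $(1,1,1)$-colorable, i.e.\ properly $3$-colorable. By Brooks' theorem, this holds for every connected subcubic graph other than $K_4$. Consequently, if $G$ has no $K_4$-component, Proposition~\ref{p01} directly delivers the desired $(1,3,3,3)$-coloring of $S(G)$.

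The main (though small) obstacle is therefore handling a connected component isomorphic to $K_4$, which cannot be properly $3$-colored. Since the connected components of $S(G)$ can be $S$-colored independently, to finish the proof I would exhibit an explicit $(1,3,3,3)$-coloring of $S(K_4)$. Label the original vertices $u_1,\dots,u_4$ and the subdivision vertices $v_{ij}$, where $v_{ij}$ subdivides the edge $u_iu_j$. Every two original vertices are separated in $S(K_4)$ by a subdivision vertex, so the set $\{u_1,\dots,u_4\}$ is independent and can all receive color $1$. The six subdivision vertices partition into the three ``opposite'' pairs $\{v_{12},v_{34}\}$, $\{v_{13},v_{24}\}$, $\{v_{14},v_{23}\}$, and one checks that the two vertices in each such pair are at distance $4$ in $S(K_4)$, hence form a $3$-packing. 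Assigning colors $3_a$, $3_b$, $3_c$ respectively to these three pairs therefore yields a valid $(1,3,3,3)$-coloring of $S(K_4)$.

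Combining this ad hoc coloring for $K_4$-components with the Brooks-plus-Proposition~\ref{p01} coloring for all remaining components of $G$ gives a $(1,3,3,3)$-coloring of $S(G)$ in every case, completing the proof.
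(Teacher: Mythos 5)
Your proposal is correct and follows essentially the same route as the paper: Brooks' theorem plus Proposition~\ref{p01} for everything except $K_4$, and an explicit coloring of $S(K_4)$ that puts color $1$ on the original vertices and distributes the colors $3_a,3_b,3_c$ over the three pairs of subdivision vertices coming from opposite edges --- which is exactly the paper's coloring, since those three pairs are the color classes of a proper edge $3$-coloring of $K_4$. Your distance computation (opposite subdivision vertices are at distance $4$ in $S(K_4)$) checks out, and your explicit handling of a possible $K_4$ component is a harmless extra precision.
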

\begin{proof}
Brooks' theorem asserts that every subcubic graph except $K_4$ is $(1,1,1)$-colorable. Hence, by Proposition \ref{p01}, every subcubic graph $G$ except $K_4$ is such that $S(G)$ is $(1,3,3,3)$-colorable.
We define a $(1,3,3,3)$-coloring of $S(K_4)$ as follows: let $\gamma:E(K_4)\rightarrow \{a,b,c\}$ be a proper edge 3-coloring of $K_4$. Put color 1 on all four original vertices of $K_4$ and put color $3_{\gamma(e)}$ on each subdivision vertex corresponding to edge $e$ of $K_4$.
\end{proof}

Goddard et al.~\cite{GoBro} characterized $(1,3,3)$-colorable graphs as the graphs obtained from any bipartite multigraph by subdividing it and adding leaves on original vertices. Therefore, there are many subdivided subcubic graphs that are not $(1,3,3)$-colorable (for instance $S(C_3)=C_6$), showing that the bound of Corollary \ref{cor1} is tight in a certain sense.

\subsection{$(1,2,\ldots,2)$-coloring}
Notice that a vertex of degree at least $3$ in a $(1,2,2)$-colored graph can not be colored by $1$. Thus, a  $(1,2,2)$-colorable graph does not contain three vertices of degree larger than 2 at mutual distance at most $2$ and in particular no cubic graph is $(1,2,2)$-colorable.
However, there exist $(1,2,2)$-colorable subcubic graphs and it has been recently proved~\cite{GasCo} that determining if a subcubic bipartite graph is $(1,2,2)$-colorable is NP-complete.

\begin{Theorem}\label{p12}
 Every subcubic graph is $(1,2,2,2,2,2,2)$-colorable.
\end{Theorem}
\begin{proof}
Let $G$ be a subcubic graph and let $e=xy$ be any edge of $G$. Define a level ordering $L_i$, $0\le i \le
r=\epsilon(e)$, of $(G,e)$. 

% Let $L_i=\{u\in V(G), d(u,e)=i\}$, $0\le i \le
% r=\max(\{d(u,e),u\in V(G)\})\le diam(G)$, be a level ordering of the vertices from $e$. 
% Two vertices $u$ and $v$ of $G$ are called \textit{siblings} if they are not adjacent, are on the same level $L_i$ for
% some $i\ge1$ and have a common neighbor in $L_{i-1}$. 
% Given a (partial) coloring of $G$, let $C_1(u)=\{c(v) : uv\in E(G)\}$ and $C_2(u)=\{c(v) : d(u,v)=2, \mbox{ with }u,v \mbox{ not
% siblings} \}$.
 
We first construct a coloring $c$ of the vertices of $G$ from level $r$ to $1$ and with colors from the set $C=\{1,2_a,2_b,2_c,2_d, 2_e, 2_f\}$, that satisfies the following properties:
\begin{enumerate}
\item[i)] color 1 is used as often as possible, i.e. when coloring a vertex $u$, if no neighbor is colored 1, then $u$ is colored 1;
\item[ii)] if $u$ is colored by $2$, then there is a subsidiary color $\tilde{c}(u)\in C$ different from $c(u)$ such that $\tilde{c}(u)\not\in C_1(u)\cup C_2(u)$, but with
possibly $\tilde{c}(u)=c(v)$ if $u$ and $v$ are siblings.
\end{enumerate}

The set $L_r$ induces a disjoint union of paths and cycles in $G$. Since paths and cycles are $(1,2,2,2)$-colorable, we are able to construct a coloring of the vertices of $L_{r}$ as follows. Start by coloring each path/cycle with colors $\{1,2_a,2_b,2_c\}$ with color $1$ used as often as possible. We suppose that color $1$ is used for the end-vertices (vertices of degree 1) of the paths except in the case the path has length $1$.
Remark that a vertex of a cycle of $L_r$ or a vertex of degree $2$ inside a path of $L_r$ can be at distance $2$ of at most one vertex in $L_r$ outside this cycle/path. A vertex of a path of length $1$ of $L_r$ can be at distance $2$ of at most two vertices in $L_r$ outside this path.
We then recolor each vertex $u$ of a path of length $1$ in $L_r$ with color from $\{2_a,2_b,2_c\}$ by a color from $\{2_d,2_e,2_f\}$ not given to vertices at distance at most $2$ from $u$ and we set $\tilde{c}(u)=2_a$.
Then, for each pair of vertices $u,v$ in different paths/cycles at distance 2 both colored by $2_a$ ($2_b$ or $2_c$ respectively), set $c(u)=2_d$ ($2_e$ or $2_f$, respectively). Afterwards, for every vertex $u$ of color $2_a$ ($2_b$, $2_c$, $2_d$, $2_e$ or $2_f$, respectively), set $\tilde{c}(u)=2_d$ ($2_e$, $2_f$, $2_a$, $2_b$ or $2_c$, respectively). Then, the produced coloring is a partial $(1,2,2,2,2,2,2)$-coloring of $G$ and Property ii) is satisfied.

Assume that we have already colored all vertices of $G$ of levels from $r$ to $i+1$ and that we are going to color
vertex $u\in L_i$, $1\le i\le r-1$. %Note that $|C_1(u)|\le 2$ and $|C_2(u)|\le 4$.
If $1\not\in C_1(u)$ then set $c(u)=1$ (Property i) is then satisfied).
If $u$ is a $0$-vertex, then it has been colored by $1$. If $u$ is a $1$-vertex, then $|C_1(u)\cup C_2(u)|\le 3$ and $u$ has at most two siblings $v$ and $v'$. In this case, we can set a color and a subsidiary color to $u$ from the set $C\setminus ( C_1(u)\cup C_2(u)\cup \{c(v), c(v') \} )$. Therefore, we now suppose that $u$ is a 2-vertex.

If $1\in C_1(u)$, then let $u_1$ be the neighbor of $u$ of color $1$ and let $u_2$ be the other neighbor of $u$,
if any. By construction, either $c(u_2)=1$ or $1\in C_1(u_2)$, hence $|C_1(u)\cup C_2(u)|\le 5$. 
In that case there are at least two colors $\{2_\alpha, 2_\beta\}\subset C\setminus \{C_1(u)\cup C_2(u)\}$ for some $\alpha,\beta\in \{a,\ldots,f\}$, with
possibly, if $u$ has a sibling $v$, $2_\beta=c(v)$. Then set $c(u)=2_\alpha$ and $\tilde{c}(u)=2_\beta$ (Property ii) is then satisfied). Figure~\ref{f1p12} illustrates this case.

\begin{figure}
\begin{center}
 \begin{tikzpicture}
 \tikzstyle{every node}=[draw,circle,fill=white,minimum size=3pt,inner sep=1pt]
\draw (0,0) node (r)  {}
      (-1,-1) node (u) [label=left:$2_f$:$2_e$] {$u$}
      (1,-1) node (v)[label=left:$2_e$] {$v$};
 \draw  [dashed]   (r) --++ (u) 
                   (r) --++ (v);
\draw (u) -- ++(220:1.5cm) node (u1) [inner sep=0pt,label=left:$1$] {$u_1$}
     (u1)-- ++(240:1.2cm) node () [label=left:$2_a$]{}
     (u1)-- ++(300:1.2cm) node () [label=left:$2_b$]{}
     (u)-- ++(320:1.5cm) node (u2) [inner sep=0pt,label=left:$2_d$]{$u_2$}
     (u2)-- ++(240:1.2cm) node ()[label=left:$1$] {}
     (u2)-- ++(300:1.2cm) node () [label=left:$2_c$]{};
 
 \end{tikzpicture}
\end{center}
\caption{\label{f1p12}A configuration in the proof of Theorem~\ref{p12}, when coloring vertex $u$. The label $2_f$:$2_e$ on $u$ means that $c(u)=2_f$ and $\tilde{c}(u)=2_e$.}
\end{figure}
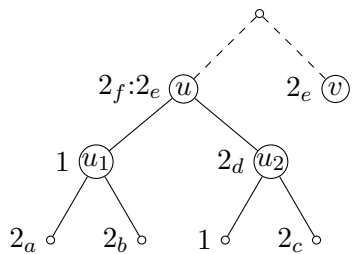

Finally, it remains to color vertices of $L_0$, i.e., $x$ and $y$. If $1\in C_1(x)\cap C_1(y)$ then, by Property i), the
neighbor $x_2$ of $x$ colored by $2$, if any, has a neighbor of color 1 and the same goes for $y$, with $y_2$ being the neighbor of $y$ colored by $2$, if any. Hence $|C_1(x)\cup C_2(x)|\le
5$, $|C_1(y)\cup C_2(y)|\le5$ and there remains at least two colors $2$ available for $x$ and two colors $2$ available for $y$. Therefore $x$ and $y$ can be assigned a different color $2$.

If $1\in C_1(x)$ but $1\not \in C_1(y)$ (or $1\in C_1(y)$ but $1\not \in C_1(x)$, by symmetry), then set $c(y)=1$. We recall that $x_2$ is the neighbor of $x$ not colored by $1$. If $C_1(x)\cup C_2(x)=C$, then set $c(x)=c(x_2)$ and $c(x_2)=\tilde{c}(x_2)$, else give to $x$ an available color.

Otherwise, $1\not\in C_1(x)\cup C_1(y)$. Then set $c(y)=1$ and we show that there is always a color 2 to assign to $x$.
If $|C_1(x)\cup C_2(x)|\le 6$, then there is a color available for $x$. Else, let $x_1$, $x_2$ be the two neighbors of
$x$ other than $y$ and let $x'_1$ ($x'_2$, respectively) be the neighbor of $x_1$ ($x_2$, respectively) colored 2 other
than $x$ (no more than one, as $x_1$ and $x_2$ both have a neighbor colored by $1$). 
Suppose, without loss of generality, that $c(x_1)=2_a$, $c(x_2)=2_b$, $c(x'_1)=2_c$ and $c(x'_2)=2_d$.
If $\tilde{c}(x_1)\in \{2_d,2_e,2_f\}$ then recolor $x_1$ by its subsidiary color $\tilde{c}(x_1)$ and set $c(x)=2_a$. Similarly, if
$\tilde{c}(x_2)\in \{2_c,2_e,2_f\}$ then recolor $x_2$ by its subsidiary color $\tilde{c}(x_2)$ and set $c(x)=2_b$. Else,
$\tilde{c}(x_1)=2_b$ and $\tilde{c}(x_2)=2_a$. Recolor $x'_1$ by its subsidiary color $\tilde{c}(x'_1)$ and set $c(x)=2_c$. If
$\tilde{c}(x'_1)=2_a=c(x_1)$, then switch the colors of $x_1$ and $x_2$ (this is possible since $\tilde{c}(x_1)=c(x_2)$
and $\tilde{c}(x_2)=c(x_1)$). Figure~\ref{f3p12} illustrates this case.

\begin{figure}
\begin{center}
 \begin{tikzpicture}
 \tikzstyle{every node}=[draw,circle,fill=white,minimum size=3pt,inner sep=1pt]
\draw (0,0) node (x)  {$x$}
        -- ++(0:3.2cm) node (y)  {$y$}
        -- ++(300:1.5cm) node [label=left: $2_f$](y2) {}
     (y)-- ++(240:1.5cm) node [label=left:$2_e$](y1) {}
     (x)-- ++(300:1.5cm) node [inner sep=0pt,label=left:$2_b$:$2_a$](x2) {$x_2$}
     (x)-- ++(240:1.5cm) node [inner sep=0pt,label=left:$2_a$:$2_b$](x1) {$x_1$}
     (x1)-- ++(250:1.2cm) node [inner sep=0pt,label=left:$2_c$:$2_a$](x11) {$x'_1$}
     (x1)-- ++(290:1.2cm) node [label=left:$1$](x12) {}
     (x2)-- ++(290:1.2cm) node [inner sep=0pt,label=right:$2_d$](x22) {$x'_2$}
     (x2)-- ++(250:1.2cm) node [label=left:$1$](x21) {};
 \draw (6.5,0) node (x) [label=above:$2_c$] {$x$}
         -- ++(0:3.2cm) node[label=above:$1$] (y)  {$y$}
        -- ++(300:1.5cm) node [label=left: $2_f$](y2) {}
     (y)-- ++(240:1.5cm) node [label=left:$2_e$](y1) {}
     (x)-- ++(300:1.5cm) node [inner sep=0pt,label=left:$2_a$](x2) {$x_2$}
     (x)-- ++(240:1.5cm) node [inner sep=0pt,label=left:$2_b$](x1) {$x_1$}
     (x1)-- ++(250:1.2cm) node [inner sep=0pt,label=left:$2_a$](x11) {$x'_1$}
     (x1)-- ++(290:1.2cm) node [label=left:$1$](x12) {}
     (x2)-- ++(290:1.2cm) node [inner sep=0pt,label=right:$2_d$](x22) {$x'_2$}
     (x2)-- ++(250:1.2cm) node [label=left:$1$](x21) {};      
 \end{tikzpicture}
\end{center}
 \caption{\label{f3p12}A configuration in the proof of Theorem~\ref{p12}, before (on the left) and after (on the right) coloring $x$ and $y$.}
\end{figure}
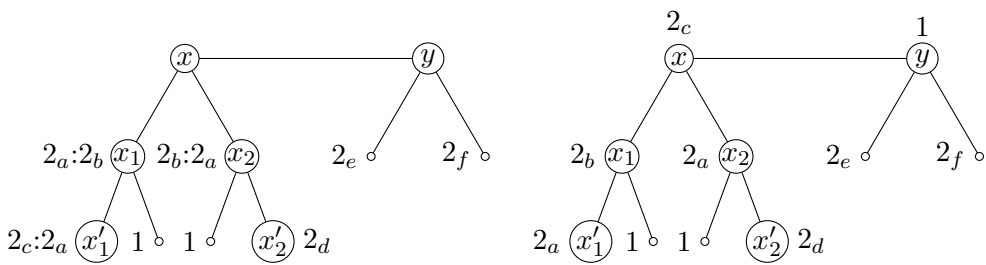

Therefore, we obtain, in all cases, a $(1,2,2,2,2,2,2)$-coloring of $G$.
\end{proof}

The Petersen graph is an example of cubic graph which is not $(1,2,2,2,2,2)$-colorable, showing that the result of
Theorem~\ref{p12} is tight in a certain sense. However, the experiments reported on Table~\ref{t2} suggest that the Petersen graph could be the only non
$(1,2,2,2,2,2)$-colorable subcubic graph. These experiments have been done using a computer to check every possible color configuration and by verifying that it is a $S$-coloring, for sequences $S$ with $s_1=1$, $|S|\ge4$ and $s_i=2$, for $i\ge2$ . We have considered the list of cubic graphs from Gordon Royle \cite{GoCu} .

\begin{table}[!ht]
\centering
\begin{tabular}{c|cccc}
$n\backslash S$&	$(1,2,2,2)$&	$(1,2,2,2,2)$&	$(1,2,2,2,2,2)$&	$(1,2,2,2,2,2,2)$\\\hline
4&  1 & 0 & 0 & 0\\
6&  1 & 1 & 0 & 0\\
8&		2&     1&	2&	0\\
10&		11&	   7&	0&	1\\
12&		11&	   74&	0&	0\\
14&		254&	  250&	5&	0\\
16&		1031&  3017& 12&	0\\
18&		15960&	25297&	44&	0\\
20&		178193&	332045&	251&	0\\
22&		2481669&	4835964&	1814&	0\\
\end{tabular}
\caption{\label{t2}Number of $S$-chromatic cubic graphs of order $n$ up to $22$.}
\end{table}

% \begin{prop}\label{p12p}
%  Any planar subcubic graph of girth $g\ge 7$ is $(1,2,2,2,2,2)$-colorable.
% \end{prop}
% \begin{proof}
%  
% \end{proof}

Furthermore, as the following proposition shows, even some bipartite cubic graphs are not $(1,2,2,2,2,3)$-colorable.
\begin{prop}
There exist bipartite cubic graphs that are not $(1,2,2,2,2,3)$-colorable.
\end{prop}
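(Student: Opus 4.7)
The plan is to use the Heawood graph $H$ (the incidence graph of the Fano plane) as an explicit witness: it is bipartite, $3$-regular, has $14$ vertices, girth $6$, and diameter $3$. With bipartition $V(H)=P\sqcup L$ and $|P|=|L|=7$, the Fano plane structure makes the metric of $H$ very rigid: two distinct same-class vertices lie at distance exactly $2$, while two opposite-class vertices are at distance $1$ (if incident) or $3$ (otherwise). I would first observe that every $2$-packing of $H$ therefore has size at most $2$ (a pair must be opposite-class and non-incident) and every $3$-packing has size at most $1$ (since $\mathrm{diam}(H)=3$).

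Next I would assume for contradiction that $c$ is a $(1,2,2,2,2,3)$-coloring of $H$, set $I=c^{-1}(1)$, let $S_a,S_b,S_c,S_d$ be the classes of the four colors $2_a,\ldots,2_d$, and set $T=c^{-1}(3)$. Combining $|S_j|\le 2$ and $|T|\le 1$ with $|I|+\sum_j|S_j|+|T|=14$ gives $|I|\ge 5$. On the other hand, since same-class pairs in $H$ lie at distance only $2$, each of the five packing classes contains at most one vertex from $P$ and at most one from $L$. Summing over these five classes yields $|P\setminus I|\le 5$ and $|L\setminus I|\le 5$, hence $|I\cap P|\ge 2$ and $|I\cap L|\ge 2$.

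The contradiction then comes from point-line duality in the Fano plane. I would pick any two lines $\ell_1,\ell_2\in I\cap L$: they meet in a unique point, so $|\ell_1\cup \ell_2|=3+3-1=5$, and independence of $I$ places all five of these points in $P\setminus I$. Combined with $|P\setminus I|\le 5$, this forces $|I\cap P|=2$. Dually, any two points of $I\cap P$ together lie on exactly $3+3-1=5$ lines (each is on $3$ lines and they share a unique common line), all of which must lie in $L\setminus I$, giving $|I\cap L|=2$. Hence $|I|=|I\cap P|+|I\cap L|=4$, contradicting $|I|\ge 5$, which shows that $H$ is not $(1,2,2,2,2,3)$-colorable.

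The only slightly delicate ingredient is the bound $|S_j|\le 2$ for the $2$-packings, which becomes immediate once one checks that distinct same-class vertices of $H$ are at distance exactly $2$; beyond that the argument is a double pigeonhole combined with the classical Fano-plane facts ``two lines cover five points'' and ``two points lie on five lines''. I do not foresee any real obstacle. A more brute-force alternative would case-split on the composition $(|I\cap P|,|I\cap L|)$ of the independent set $I$ (which is forced to have size $\ge 5$), but the duality argument above avoids this entirely.
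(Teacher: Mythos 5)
Your proof is correct, and in fact your witness is the same graph as the paper's: the order-$14$ bipartite cubic graph of Figure~\ref{fbip14} is exactly the Heawood graph (every pair of vertices in one part has a unique common neighbour, so it is the incidence graph of the Fano plane). The opening of your argument also coincides with the paper's: diameter $3$ forces every $3$-packing to be a single vertex, same-part pairs at distance $2$ force every $2$-packing to have size at most $2$, and hence the color-$1$ class $I$ must satisfy $|I|\ge 5$. Where you diverge is the endgame. The paper splits into cases according to how $I$ meets the two sides of the bipartition ($I$ contained in one side, versus the forced split $1+4$) and finishes each case by counting how many vertices the remaining colors can cover. You instead observe that each of the five non-$1$ classes meets each side in at most one vertex, so $|I\cap P|\ge 2$ and $|I\cap L|\ge 2$, and then invoke Fano duality (two lines cover five points, two points lie on five lines, and $I$ is independent) to pin down $|I|=4$, contradicting $|I|\ge 5$ with no case analysis at all. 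Your version is arguably cleaner and makes explicit the design-theoretic structure that the paper's Case~2 uses only implicitly (the claim that $|I\cap A|\ge 2$ and $|I\cap B|\ge 2$ cannot both hold with $|I|\ge5$ is precisely the "two points/lines exclude five lines/points" fact); the paper's version has the minor extra payoff of exhibiting an explicit $(1,2,2,2,2,2)$-coloring of the same graph, which is not needed for the statement itself.
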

\begin{proof}
The cubic graph depicted in Figure~\ref{fbip14} is bipartite and is $(1,2,2,2,2,2)$-colorable, as shown on the figure.
Let $(A,B)$ be the two subsets of vertices that form a bipartition of this graph. Suppose this graph is $(1,2,2,2,2,3)$-colorable and let $c$ be a $(1,2,2,2,2,3)$-coloring and $X_1$ be the set of vertices colored 1.
Remark that the cardinality of any $2$-packing is at most $2$ and that any pair of vertices $(u,v)$ included in $A$ or in $B$ is such that $d(u,v)\le 2$.
We have $|X_1|\ge 5$, as at most one vertex can be colored by $3$ (since the diameter of the graph is $3$) and at most two vertices can be colored the same color $2$.

First, if $X_1\subseteq A$ or $X_1\subseteq B$, then each remaining vertex should be colored differently in the other partition, which is impossible since $|A|=|B|=7$.

Second, if there are vertices colored by $1$ in $A$ and $B$, then the only possibility in order to have $|X_1|\ge 5$ is to have one vertex colored by $1$ in one partition and four vertices colored 1 in the other partition (since othewise we obtain adjacent vertices of color $1$). Suppose, without loss of generality, that $|X_1\cap A|=1$ and $|X_1\cap B|=4$. Exactly three vertices are not colored 1 in $B$. Consequently, only three pairs of vertices can have the same color $2$ and the three vertices not colored by $1$ in $A$ cannot be all colored with the remaining colors 2 and 3.
%either $|X_1\cap A|=1$ and $|X_1\cap B|=4$ or $|X_1\cap B|=1$ and $|X_1\cap A|=4$ (as every vertex has at most one common neighbor, in every other cases, we have a contradiction with $|X_1|\ge 5$).
\end{proof}

\begin{figure}[ht]
\begin{center}
\begin{tikzpicture}[scale=1]
 \tikzstyle{every node}=[draw,circle,fill=white,minimum size=3pt,inner sep=1pt]
\draw (0,0) node (r)  [label=above:$1$]{}
        -- ++(-2.8cm,-1cm) node (f1) [label=left:$2_a$] {}
       (r) -- ++(0,-1cm) node (f2) [label=left:$2_b$] {}
       (r) -- ++(2.8cm,-1cm) node (f3) [label=right:$2_c$] {}
       (f1) -- ++(240:1.3cm) node (f11) [label=left:$1$] {}
      (f1) -- ++(300:1.3cm) node (f12) [label=left:$2_b$] {}
      (f2) -- ++(240:1.3cm) node (f21) [label=left:$2_a$] {}
      (f2) -- ++(300:1.3cm) node (f22) [label=left:$2_c$] {}
      (f3) -- ++(240:1.3cm) node (f31) [label=right:$2_d$] {}
      (f3) -- ++(300:1.3cm) node (f32) [label=right:$2_e$] {}
      (f11) -- ++(1cm,-1.8cm) node (f111) [label=left:$2_e$] {}
      (f11)-- ++(2.5cm,-1.8cm) node (f112) [label=left:$2_d$] {}
      (f22)-- ++(0.5cm,-1.8cm) node (f221) [label=right:$1$] {}
      (f32)-- ++(-1cm,-1.8cm) node (f321) [label=right:$1$] {}
      (f111)-- ++ (f21)
      (f111)--++(f31)
      (f112)--++ (f22)
      (f112)--++ (f32)
      (f221)--++ (f12)
      (f221)--++ (f31)
      (f321)--++ (f12)
      (f321)--++ (f21);
\end{tikzpicture}
\end{center}
\caption{\label{fbip14} A cubic bipartite $(1,2,2,2,2,2)$-chromatic graph of order 14.}
\end{figure}
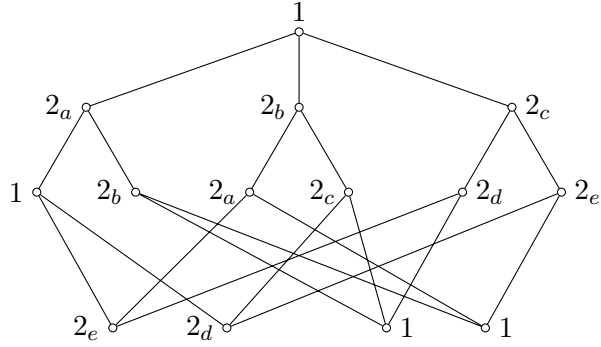

The next results show that there are sub-families of subcubic graphs that can be colored with fewer colors.

\begin{Theorem}\label{p1222}
Every 3-irregular subcubic graph is $(1,2,2,2)$-colorable. 
\end{Theorem}

\begin{proof}
%If $u$ is a degree 3 vertex, then if $1\not\in C_1(u)$ then set $c(u)=1$ else, there remain a color among $\{2_a,2_b,2_c\}$ to assign to $u$ since the two colored neighbors of $u$ are of degree 2 and the one colored $2$ has a neighbor colored 1.
%If $u$ is a degree 2 vertex, then let $v$ be its colored neighbor. If $v$ has color 1, then there remain a color among $\{2_a,2_b,2_c\}$ to assign to $u$ and the colors of $u$ and $v$ can be switched if necessary. If $v$ is colored 2 then $u$ can be either colored 1 or a color among $\{2_a,2_b,2_c\}$. 
%Now we show that it is always possible to color $x$. As $x$ is a 3-vertex, then its neighbors are all 2-vertices. 
%If $1\not\in C_1(x)$ then set $c(x)=1$. Otherwise, every neighbor $y$ of $x$ of color 1 can be recolored by its subsidiary color and we can set again $c(x)=1$.
%Note that $|C_1(u)|\le 2$ and $|C_2(u)|\le \textcolor{red}{6}$.
Let $G$ be a 3-irregular graph and let $e=xy$ be any edge of $G$ such that $x$ and $y$ are both of degree at most $2$.
If no such edge exists, then the graph is the subdivision $S(H)$ of some subcubic graph $H$ where leaves could be added on original vertices of degree 2 and thus $G$ is $(1,3,3,3)$-colorable by Corollary \ref{cor1}.
Define a level ordering $L_i$, $0\le i \le r=\epsilon(e)$, of $(G,e)$.

We construct a coloring $c$ of the vertices of $G$ from level $r$ to $1$ and with colors from the set $\{1,2_a, 2_b,2_c\}$, that satisfies the following properties:
\begin{enumerate}
\item[i)] color $1$ is used as often as possible for vertices of degree at most 2, i.e. when coloring a vertex $u$ of degree at most $2$, if no neighbor is colored by $1$, then $u$ is colored by $1$;
\item[ii)] every vertex of degree $2$ is colored by $1$ when first coloring vertices of $L_i$, except if the connected component containing this vertex in $L_i$ is a path of order 2 (in which case one of the two vertices is colored 1).
\end{enumerate}

The set $L_{r}$ induces a disjoint union of paths of order at most 3 in $G$. Since paths are $(1,2,2)$-colorable, the vertices of $L_{r}$ can be $(1,2,2,2)$-colored.
Moreover, in every path of order 3 in $L_r$, the central vertex has degree 3, thus a color $1$ could be given to every vertex of degree $2$. If the path is of order 2, one of its end-vertices is colored by $1$.
Thus, Properties i) and ii) are satisfied.

Assume that we have already colored all vertices of $G$ of levels from $r$ to $i+1$ and that we are going to color vertex $u\in L_i$, $1\le i\le r-1$. We consider two cases depending on the degree of $u$:
%\begin{description}

\noindent\textbf{Case 1.}  $u$ is of degree 3.

If $1\notin C_1(u)$, then $u$ can be colored by $1$.
Let $u_1$ and $u_2$ be the colored neighbors of $u$, with $c(u_1)=1$.
By Property i), either $u_2$ or a colored neighbor of $u_2$ has color 1. Hence, we have $|C_1(u)\cup C_2(u)|\le 3$ and $u$ can be colored some color $2$.

\noindent\textbf{Case 2.}  $u$ is of degree 1 or 2.

If $u$ has degree 1, then we can color $u$ by 1. If $1\notin C_1(u)$, then we can set $c(u)=1$.
Otherwise, let $u_1$ be the colored neighbor of $u$, if any. If $u_1$ is of degree 3, let $u_{1,1 }$ and $u_{1,2}$ be the colored neighbor of $u$, let $u_{1,1,1}$ be the neighbor of $u_{1,1}$ different from $u$ and let $u_{1,2,1}$ be the neighbor of $u_{1,2}$ different from $u$.
Since $1\in C_1(u)$, $c(u_1)=1$ and thus $c(u_{1,1})\ne 1$ and $c(u_{1,2})\ne 1$. Therefore, by Property i), $c(u_{1,1,1})=c(u_{1,2,1})=1$.
Thus, $u_1$ can be recolored by some color $2$ and we can set $c(u)=1$.
If $u_1$ is of degree at most 2, then, since $|C_1(u_1)\cup C_2(u_1)|\le 3$, we can recolor $u_1$ by a color $2$.
Thus, we can set $c(u)=1$.
%\end{description}
\medskip

Finally, it remains to color vertices of $L_0$, i.e. $x$ and $y$.
Let $x_1$ be the possible neighbor of $x$ different from $y$ and let $y_1$ be the possible neighbor of $y$ different from $x$. We consider three cases that cover all the possibilities by symmetry:

\noindent\textbf{Case 1.} $x_1$ and $y_1$ both have degree 3.

If their neighbors different from $x$ and $y$ are not adjacent between them, then, by Property ii),
these vertices have color $1$ (they have not been recolored when coloring the vertices of $L_1$ since $x_1$ and $y_1$ have degree 3) and $x_1$ and $y_1$ have some color $2$.
Thus we can set $c(x)=1$ and some color $2$ to $y$, as $|C_1(y)\cup C_2(y)|\le 3$.
Suppose that the two vertices of $N(x_1)\setminus\{x\}$ are each adjacent to a different vertex of $N(y_1)\setminus \{y\}$.
Thus, $G$ is the graph $G_{(1,2,2,2)}$ from Figure \ref{f1p15} and is $(1,2,2,2)$-colorable.
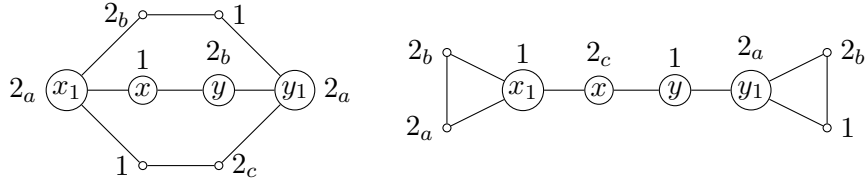
\begin{figure}
\begin{center}
 \begin{tikzpicture}
\tikzstyle{every node}=[draw,circle,fill=white,minimum size=3pt,inner sep=1pt]
\draw (0,0) node (x1)[label=left:$2_a$]  {$x_1$};
\draw (1,0) node (x) [label=north:$1$] {$x$};
\draw (2,0) node (y) [label=north:$2_b$] {$y$};
\draw (3,0) node (y1)[label=right:$2_a$]  {$y_1$};
\draw (1,1) node (x11)[label=left:$2_b$]   {};
\draw (2,1) node (y11)[label=right:$1$]  {};
\draw (1,-1) node (x12)[label=left:$1$]   {};
\draw (2,-1) node (y12)[label=right:$2_c$]  {};
\draw (x1) -- (x);
\draw (x) -- (y);
\draw (y1) -- (y);
\draw (x1) -- (x11);
\draw (y1) -- (y11);
\draw (x1) -- (x12);
\draw (y1) -- (y12);
\draw (x11) -- (y11);
\draw (x12) -- (y12);

\draw (5,-0.5) node (x11')[label=left:$2_a$]  {};
\draw (5,0.5) node (x12')[label=left:$2_b$]  {};
\draw (6,0) node (x1')[label=north:$1$]  {$x_1$};
\draw (7,0) node (x')[label=north:$2_c$]  {$x$};
\draw (8,0) node (y')[label=north:$1$]  {$y$};
\draw (9,0) node (y1')[label=north:$2_a$]  {$y_1$};
\draw (10,-0.5) node (y11')[label=right:$1$]  {};
\draw (10,0.5) node (y12')[label=right:$2_b$]  {};

\draw (x1') -- (x');
\draw (x') -- (y');
\draw (y1') -- (y');
\draw (x1') -- (x11');
\draw (y1') -- (y11');
\draw (x1') -- (x12');
\draw (y1') -- (y12');
\draw (x11') -- (x12');
\draw (y11') -- (y12');

\end{tikzpicture}
\end{center}
\caption{\label{f1p15} The graphs $G_{(1,2,2,2)}$ (on the left) and $G'_{(1,2,2,2)}$ (on the right) from Proposition~\ref{p1222}.}
\end{figure}
Suppose that the two vertices of $N(x_1)\setminus \{x\}$ are adjacent and that the two vertices of $N(y_1)\setminus \{y\}$ are adjacent.
In this case, $G$ is the graph $G'_{(1,2,2,2)}$ from Figure \ref{f1p15} and is $(1,2,2,2)$-colorable.
Suppose that only one vertex  of $N(x_1)\setminus \{x\}$ is adjacent to a vertex of $N(y_1)\setminus \{y\}$.
Let $x_{1,1}$ and $y_{1,1}$ be these two adjacent neighbors, the other neighbors are colored by $1$ by Property ii).
One of these two vertices is colored by $1$ and the other one is colored by $2$. Suppose without loss of generality that $c(x_{1,1})=1$.
Hence, we have $|C_1(x)\cup C_2(x)|\le 3$ and we can color $x$ by a color $2$ and set $c(y)=1$.
Suppose now that the two vertices of $N(x_1)\setminus \{x\}$ are adjacent and that the two vertices of $N(y_1)\setminus \{y\}$ are not adjacent.
By Property ii), the neighbors of $y_1$ have color $1$ and $y_1$ has some color $2$. In this case, we color $y$ by $1$, $x$ by another color $2$, $x_1$ by $1$ and the two adjacent neighbors of $x_1$ by colors from $C\setminus \{1, c(x)\}$.

\noindent\textbf{Case 2.} $x_1$ has degree at most $2$ and $y_1$ has degree $3$.

Since $|C_1(y_1)\cup C_2(y_1)|\le 3$, then $y_1$ can be recolored by some color $2$. 
If $x_1$ and $y_1$ are not adjacent, then $x_1$ is colored by $1$ by Property ii). Otherwise ($x_1$ and $y_1$ are adjacent), we can color $x_1$ by $1$ since $y_1$ has not been colored by $1$.
Afterward, we set $c(y)=1$, and since $|C_1(x)\cup C_2(x)|\le 3$ we can set a color $2$ to $x$.

\noindent\textbf{Case 3.} $x_1$ and $y_1$ are both of degree at most 2. 

If $x_1$ and $y_1$ are adjacent, then the graph is $C_4$ which is trivially $(1,2,2,2)$-colorable.
If $x_1$ and $y_1$ are not adjacent, then, by Property ii), they both have color $1$, $|C_1(x)\cup C_2(x)|\le 2$ and $|C_1(y)\cup C_2(y)|\le 2$.
Thus, we can set some colors $2$ to $x$ and $y$.

Therefore, we obtain in all cases a $(1,2,2,2)$-coloring of $G$.
\end{proof}

Remark that the 5-cycle $C_5$ is 3-irregular and is not $(1,2,2)$-colorable, hence the result of Theorem~\ref{p1222} is tight in a certain sense.
However, there are 3-irregular subcubic graphs that are $(1,2,2,3)$-colorable. The graphs from Figure~\ref{f1p15} are such examples (the color $2_c$ can be replaced by color 3).

We end this section with some results on subdivided graphs. Let $\delta(G)$ be the minimum degree of $G$. In the following theorem, we suppose that $\delta(G)\ge 3$ in order to avoid cycles with few vertices of degree at least 3 (every cycle, except $C_5$, is $(1,2,2)$-colorable).
\begin{prop}\label{p02}
For every graph $G$ with $\delta(G)\ge 3$, if $S(G)$ is $(1,2,2)$-colorable, then $G$ is bipartite.
\end{prop}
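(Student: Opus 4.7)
The plan is to restrict, in any purported $(1,2,2)$-coloring of $S(G)$, which colors may appear on original vertices, and then argue that the resulting restriction is in fact a proper $2$-coloring of $G$.

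Let $c$ be a $(1,2,2)$-coloring of $S(G)$, using colors $\{1,2_a,2_b\}$, and let $v$ be an original vertex. Since $\delta(G)\ge 3$, vertex $v$ has at least three neighbors in $S(G)$ (all of them subdivision vertices). The first step is to rule out $c(v)=1$. If $c(v)=1$, then each neighbor of $v$ must receive a color in $\{2_a,2_b\}$, so by pigeonhole two such neighbors share a color. But any two neighbors of $v$ are at distance exactly $2$ in $S(G)$, contradicting the $2$-packing condition on the color they share. Hence every original vertex is colored with some $2_\alpha$.

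Next, consider any edge $uw$ of $G$ and let $x$ be the subdivision vertex inserted on $uw$ in $S(G)$. Then $d_{S(G)}(u,w)=2$. Since $c(u),c(w)\in\{2_a,2_b\}$, the $2$-packing constraint forces $c(u)\neq c(w)$. Thus the map $v\mapsto c(v)$, restricted to original vertices, is a proper $2$-coloring of $G$, which shows $G$ is bipartite.

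The argument is essentially a two-line pigeonhole followed by a distance computation; there is no real obstacle, but one must be a little careful to remember that $d_{S(G)}(u,w)=2$ for adjacent $u,w\in V(G)$ (not $1$ as in $G$), so the packing condition on color $2$ is exactly sharp and delivers the inequality $c(u)\neq c(w)$.
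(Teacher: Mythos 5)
Your proof is correct and follows essentially the same route as the paper's: both first show that every original vertex must receive a color $2$ (the pigeonhole on its $\ge 3$ neighbors, pairwise at distance $2$), and both then use that adjacent original vertices are at distance exactly $2$ in $S(G)$. The only cosmetic difference is that you conclude by exhibiting a proper $2$-coloring of $G$ directly, while the paper argues the contrapositive via the impossibility of alternating $2_a,2_b$ around an odd cycle.
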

\begin{proof}
Suppose $S(G)$ is $(1,2,2)$-colorable and $G$ contains an odd cycle. In every $(1,2,2)$-coloring of a graph, every vertex of degree at least $3$ should be colored some color $2$
(if a vertex of degree at least $3$ is colored by $1$, the coloring cannot be extended to the neighbors of this vertex).
Therefore, if $G$ contains an odd cycle, then $S(G)$ contains a cycle with an odd number of vertices of degree $3$ and the colors $2_a$ and $2_b$ are not sufficient to alternately color these vertices.
Hence $S(G)$ is not $(1,2,2)$-colorable.
\end{proof}
Since every bipartite graph $G$ is $(1,1)$-colorable, by Proposition \ref{p01}, $S(G)$ is $(1,3,3)$-colorable (and also $(1,2,2)$-colorable and $(1,2,3)$-colorable). Thus, we obtain the following corollary.
\begin{coro}
For every graph $G$ with $\delta(G)\ge 3$,\\
\\
$S(G)$ is $(1,2,2)\text{-colorable} \Leftrightarrow S(G)$ is $(1,2,3)\text{-colorable} \Leftrightarrow S(G)$ is $(1,3,3)\text{-colorable}\Leftrightarrow G$ is $\text{bipartite}$.
\end{coro}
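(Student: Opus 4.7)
The plan is to prove the four statements equivalent by a cycle of implications
$$G \text{ bipartite} \;\Rightarrow\; S(G) \text{ is } (1,3,3)\text{-col.} \;\Rightarrow\; S(G) \text{ is } (1,2,3)\text{-col.} \;\Rightarrow\; S(G) \text{ is } (1,2,2)\text{-col.} \;\Rightarrow\; G \text{ bipartite},$$
all of whose pieces are either given by results proved earlier in the excerpt or by a one-line color-relabeling. Note that the assumption $\delta(G)\ge 3$ is only needed for the last step (via Proposition~\ref{p02}).

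For the first implication, I would observe that a bipartite graph $G$ is $(1,1)$-colorable, so by Proposition~\ref{p01} its subdivision $S(G)$ is $(1,3,3)$-colorable. This is exactly the remark made in the paragraph preceding the corollary, and does not require $\delta(G)\ge 3$.

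The two middle implications reduce to the elementary observation that every $d$-packing is also a $d'$-packing for $d'\le d$. Concretely, a color class of color $3$ in a $(1,3,3)$-coloring is a $3$-packing, hence also a $2$-packing, so relabeling one of the colors $3$ by $2$ produces a $(1,2,3)$-coloring of $S(G)$; relabeling the remaining $3$ by $2$ then yields a $(1,2,2)$-coloring.

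For the closing implication I would invoke Proposition~\ref{p02} directly: under the hypothesis $\delta(G)\ge 3$, the $(1,2,2)$-colorability of $S(G)$ forces $G$ to be bipartite. This completes the cycle. There is no genuine obstacle here, since the nontrivial content (Proposition~\ref{p01} in one direction, Proposition~\ref{p02} in the other) has already been established; the only step to be careful about is keeping track of which direction uses the minimum-degree hypothesis.
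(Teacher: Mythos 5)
Your proposal is correct and follows essentially the same route as the paper: the authors likewise combine the remark that a bipartite $G$ is $(1,1)$-colorable with Proposition~\ref{p01} to get $(1,3,3)$-colorability of $S(G)$ (hence $(1,2,3)$- and $(1,2,2)$-colorability by weakening $3$-packings to $2$-packings), and close the loop with Proposition~\ref{p02}. Your explicit cycle of implications and the observation that only the last step uses $\delta(G)\ge 3$ merely make the paper's two-sentence argument more precise.
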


\section{$(1,1,2,\ldots)$-coloring}
Similarly with $(1,2,2)$-coloring, it has been recently proved~\cite{GasCo} that determining if a subcubic or a cubic graph is $(1,1,2)$-colorable is NP-complete.
Remind that bipartite graphs are $(1,1)$-colorable. For non-bipartite subcubic graphs, we prove the following result using a different argument than for the previous theorems. Using the method of proofs of Theorems \ref{p12} and \ref{p1222} allows to show that subcubic graph are $(1,1,2,2,2,2)$-colorable.
\begin{Theorem}\label{p13}
Every subcubic graph is $(1,1,2,2,2)$-colorable.
\end{Theorem}

\begin{proof}
Let $G$ be a subcubic graph.
For a vertex $u$ of $V(G)$, we denote by $B_2(u)$ the set $\{v\in V(G)|\  d_G (u,v)\le 2,\ u\neq v \}$.
A set $X\subseteq V(G)$ is an \emph{odd-cut set} if the graph induced by $V(G)\setminus X$ is a bipartite graph. Such set can be obtained by removing one vertex per odd cycle from $V(G)$.
Let $G_X$ be the graph with vertex set $X$ and edge set $\{ uv |\ u,\ v\in X,\ d_{G}(u,v)\le 2,\ u\neq v \}$.
The proof consists of proving the existence of an odd-cut set $X$ such that $G_X$ is a subcubic graph (not necessarily connected) and $G_X$ has no connected component isomorphic to $K_4$.

If $G_X$ is subcubic and has no connected component isomorphic to $K_4$, then we can construct a coloring $c$ of $G$ with colors from the set $C=\{1_a,1_b,2_a,2_b,2_c\}$ as follows.
Since the graph induced by $V(G)\setminus X$ is bipartite, we can color the vertices of $V(G)\setminus X$ with colors $1_a$ and $1_b$.
By Brook's Theorem, if $G_X$ is subcubic and has no connected component isomorphic to $K_4$, then there exists a proper vertex-coloring $c'$ of $G_X$ with colors from the set $\{a,b,c\}$.
For a vertex $u\in X$, we define $c$ by $c(u)=2_{c'(u)}$.

For a cycle $C$ of $G$, we denote by $V(C)$ the set of vertices in $C$. 
For an odd-cut set $X\subseteq V(G)$ and a vertex $u\in X$, we denote by $C_u$ a cycle such that $X\cap V(C_u)=\{u\}$.
In the rest of the proof, when $G_X$ is supposed to have an induced subgraph isomorphic to $K_4$, we denote by $u_1$, $u_2$, $u_3$ and $u_4$ the vertices of this induced subgraph and by $U$ the set $\{u_1,u_2,u_3,u_4\}$. For two vertices $u_i$ and $u_j$, $1\le i<j\le 4$, we denote by $u_{ij}$ a common neighbor of $u_i$ and $u_j$ in $V(G)$ when $d_G (u_i,u_j)=2$. When $u_{ij}$ exists, we denote by $u'_{ij}$, the possible vertex in $N(u_{ij})\setminus\{u_i,u_j\}$. \\

\textbf{Claim 1.}
For an odd-cut set $X$ in $G$ with minimum cardinality, the following properties hold:
\begin{enumerate}
\item[(A1)] for every vertex $u\in X$, there exists at least one cycle $C$ such that $V(C)\cap X=\{u\}$ and consequently $|N(u)\cap X|\le 1$;
\item[(A2)] for every two vertices $u,v\in X$ such that $uv\in E(G)$, we have $d_{G_X}(u)\le 3$ and $d_{G_X}(v)\le 3$;
\item[(A3)] for every vertex $u\in X$, $d_{G_X}(u)\le 4$;
\item[(A4)] if there does not exist three vertices $u,v,w \in X$ such that $N(u)\cap N(v) \cap N(w)\neq \emptyset$ and $\max(d_{G_X}(u),d_{G_X}(v),d_{G_X}(w) )\ge 3$, then for every vertex $x\in X$, $d_{G_X}(x)\le 3$.
\end{enumerate}
\textit{Proof.} The left part of Figure \ref{f1p11222a} illustrates a vertex $u\in X$ satisfying $d_{G_X}(u)= 4$. 

(A1): If such cycle did not exist, then $X\setminus\{u\}$ would be also an odd-cut set, contradicting its minimality. Moreover, since $u$ has two neighbors in $V(C_u)$, $|N(u)\cap X|\le 1$.

(A2): Let $u$ and $v$ be two adjacent vertices of $X$.
By Property (A1), there exist two pairs of distinct vertices $\{u_1,u_2\}$ and $\{v_1,v_2\}$ such that $\{u_1,u_2\}\subseteq N(u)\cap V(C_u)$ and $\{v_1,v_2\}\subseteq N(v)\cap V(C_v)$. Notice that both $u_1$ and $u_2$ have at least two neighbors in $V(C_u)$. Then, $|B_2(u)\setminus (V(C_u)\cup V(C_v))|\le 3 $.
Since the same property holds for $C_v$, we have $d_{G_X}(u)\le 3$ and $d_{G_X}(v)\le 3$.

(A3) and (A4): Let $u$ be a vertex of $X$. By Property (A1), there exist two distinct vertices $u_1$ and $u_2$ such that $\{u_1,u_2\}\subseteq N(u)\cap V(C_u)$. As previously, $u_1$ and $u_2$ have at least two neighbors in $V(C_u)$. Let $u_3$ be the possible vertex in $N(u)\setminus V(C_u)$. By Property (A2), if $u_3\in X$, then $d_{G_X}(u)\le 3$. Thus, we can now suppose that $u_3\notin X$.
Since $|(N(u_3)\setminus \{u\})\cap X|\le 2$, we have $|B_2(u)\cap X|\le 4$ and consequently $d_{G_X}(u)\le 4$.
Moreover, if there exists a vertex $u$ such that $d_{G_X}(u)=4$, then we have $|(N(u_3)\setminus \{u\})\cap X|= 2$ and $u_3$ is a common neighbor of three distinct vertices of $X$. Consequently, the contrapositive of Property (A4) follows.
\begin{flushright} $\square$ \end{flushright}

\textbf{Claim 2.}
For an odd cut set $X$ of minimum cardinality minimizing $|E(G_X)|$, the following properties hold:\\
\begin{enumerate}
\item[(B1)] there does not exist three vertices $u,v,w\in X$ such that $N(u)\cap N(v) \cap N(w)\neq \emptyset$ and $\max(d_{G_X}(u),d_{G_X}(v),d_{G_X}(w) )\ge 3$;
\item[(B2)] if there is an induced $K_4$ in $G_X$, then at most two vertices of $U$ are adjacent in $G$;
\item[(B3)] if there is an induced $K_4$ in $G_X$ and two vertices of $U$ are adjacent in $G$, then no vertex of $U$ is in an induced triangle in $G$;
\item[(B4)] if there is an induced $K_4$ in $G_X$ and two vertices of $U$ are adjacent in $G$, then for every vertex $v\in \{u_{12},u_{13},u_{14},u_{23},u_{24},u_{34}\}$, we have $|B_2(v)\cap X|\le 4$;
\item[(B5)] if there is an induced $K_4$ in $G_X$ and $U$ is an independent set in $G$, then for every vertex $v\in \{u_{12},u_{13},u_{14},u_{23},u_{24},u_{34}\}$, we have $|B_2(v)\cap X|\le 3$;
\item[(B6)] if there is an induced $K_4$ in $G_X$ and $U$ is an independent set in $G$, then the vertices of at most one pair among $\{(u_{12},u_{34}),$ $(u_{13},u_{24}),(u_{14},u_{23})\}$ have a common neighbor.
\end{enumerate}
\textit{Proof.}

 (B1): Let $u$, $v$ and $w$ be three distinct vertices of $X$ and let $z$ be a common neighbor of these three vertices.
Suppose $d_{G_X}(u)\ge 3$. Notice that, for any odd cycle $C_u$, $z\notin C_u$, since it would imply that either $v$ or $w$ belongs to $V(C_u)$. Thus, there exist two distinct vertices $u_1$ and $u_2$ such that $\{u_1,u_2\}\subseteq (N(u)\setminus \{z\})\cap V(C_u)$. Moreover, both $u_1$ and $u_2$ have at least two neighbors in $V(C_u)$. Since $d_{G_X}(u)\ge 3$, either $u_1$ or $u_2$ has a neighbor in $X$. Suppose, without loss of generality, that $u_1$ has a neighbor $u'\in X$. In this case, we can also suppose that there exists one odd cycle $C_{u'}$ such that $u_1\notin V(C_{u'})$. If such odd cycle did not exist, then $(X\setminus\{u,u'\})\cup\{u_1\}$ would be also an odd-cut set, contradicting the minimality of $X$.
Let $u''$ be the neighbor of $u_1$ in $V(C_u)\setminus\{u\}$. Since $u'$ has two neighbors in $V(C_{u'})$, $u'$ has no neighbors in $X$ and since $u''$ has two neighbors in $V(C_u)$, we obtain that $|(B_2(u_1)\cap X)\setminus \{u\}|\le 2$.
Consequently, $(X\setminus\{u\})\cup \{u_1\}$ is also a odd cut set, and since $\{u_1v ,u_1 w\}\notin E(G_{(X\setminus\{u\})\cup \{u_1\}})$ and we added at most one edge in $E(G_{(X\setminus\{u\})\cup \{u_1\}})$, we have $|E(G_{(X\setminus\{u\})\cup \{u_1\}})|<|E(G_X)|$, contradicting the minimality of $|E(G_X)|$.

(B2): Notice that by Property (B1), no three vertices of $U$ have a common neighbor. Moreover, by Property (A1), a vertex of $X$ cannot have two neighbors in $X$. Consequently, at most two pairs of vertices of $U$ are adjacent.
Suppose, without loss of generality, that $(u_1,u_2)$ and $(u_3,u_4)$ are two pairs of adjacent vertices.
In this case, we remove $U$ from $X$ and replace it by $\{u_{13},u_{14},u_{24}\}$, contradicting the minimality of $X$.

(B3): Suppose, without loss of generality, that $u_{12}$ and $u_{13}$ exist and are adjacent. Let $u'$ and $u''$ be two existing vertices among $\{u_{14},u_{23},u_{24},u_{34}\}$ such that the three vertices $u'$, $u''$ and $u_{13}$ have no common neighbor. In this case, we remove $U$ from $X$ and replace it by $\{u',u'',u_{13}\}$, contradicting the minimality of $X$. There exist three configurations depending on which are the two adjacent vertices from $U$. However, for the three possibilities, it is trivial to check that if we remove $U$ from $X$ and replace it by $\{u',u'',u_{13}\}$ we obtain an odd-cut set.

(B4): Suppose, without loss of generality, that $u_1$ and $u_2$ are adjacent. We can remark that, by Property (B1), no vertex of $\{u_{13},u_{14},u_{23},u_{24},u_{34}\}$ has a neighbor in $X\setminus U$. A consequence is that $|B_2(u_{34})\cap X|\le 4$. Suppose that $|B_2(u_{13})\cap X|> 4$. Then, $X$ contains two neighbors of $u'_{13}$. Consequently, every odd cycle $C$ with $u_1\in V(C)$ contains also another vertex of $X$. Hence, $X\setminus\{u_1\}$ is an odd-cut set, contradicting the minimality of $X$. By symmetry, the same goes for $u_{14}$, $u_{23}$ and $u_{24}$.

(B5): Suppose that the vertex $u_{12}$ satisfies $|B_2(u_{12})\cap X|> 3$. We can remark that if $u_{13}$ satisfied also $|B_2(u_{13})\cap X|> 3$, then, as for Property (B4), $X\setminus\{u_1\}$ would be an odd-cut set, contradicting the minimality of $X$. Thus, we can suppose that  $|B_2(u_{13})\cap X|\le 3$. Consequently, we can remove $\{u_1\}$ from $X$ and replace it by $\{u_{13}\}$, contradicting the minimality of $|E(G_X)|$. By symmetry, the same property holds for $u_{13}$, $u_{14}$, $u_{23}$, $u_{24}$ and $u_{34}$.

(B6): Suppose that the vertices of two pairs among $\{(u_{12},u_{34}),(u_{13},u_{24}),(u_{14},u_{23})\}$ have a common neighbor and that these two pairs are $(u_{12},u_{34})$ and $(u_{13},u_{24})$, i.e. $N(u_{12})\cap N(u_{34})\neq \emptyset$ and $N(u_{13})\cap N(u_{24})\neq \emptyset$ (other pairs can be treated similarly). Notice that $(X\setminus U)$ $\cup \{u'_{12},u_{14},u_{23}\}$ is also an odd-cut set, contradicting the minimality of $X$. Note that in this case, there is an even cycle containing each vertex of $U$ that does not contain vertices of $(X\setminus U)\cup\{u'_{12},u_{14},u_{23}\}$.
\begin{flushright} $\square$ \end{flushright}

 \begin{figure}
 \begin{center}
 \begin{tikzpicture}[scale=0.8]
\node (v1) at (-2-1,0.5) [draw,circle,color=blue,fill=blue,scale=0.7] {};
\node (w1) at (2-1,0.5) [draw,circle,color=blue,fill=blue,scale=0.7] {};
\node (z1) at (-0.5-1,2)[draw,circle,color=blue,fill=blue,scale=0.7] {};
\node (z2) at (0.5-1,2) [draw,circle,color=blue,fill=blue,scale=0.7] {};

 \node (u1) at (5,-2+3) [draw,circle,color=blue,fill=blue,scale=0.7] {};
 \node (u2) at (5,0+3) [draw,circle,color=blue,fill=blue,scale=0.7] {};
 \node (u3) at (7,-2+3) [draw,circle,color=blue,fill=blue,scale=0.7] {};
 \node (u4) at (7,0+3) [draw,circle,color=blue,fill=blue,scale=0.7] {};

 \node (i1) at (5,1-3) [draw,circle,color=blue,fill=blue,scale=0.7] {};
 \node (i2) at (5,3-3) [draw,circle,color=blue,fill=blue,scale=0.7] {};
 \node (i3) at (7,1-3) [draw,circle,color=blue,fill=blue,scale=0.7] {};
 \node (i4) at (7,3-3) [draw,circle,color=blue,fill=blue,scale=0.7] {};

 \node (x12) at (9+2,-1+3) [draw,circle,color=red,fill=red,scale=0.7]{};
 \node (x34) at (11+2,-1+3) [draw,circle,color=red,fill=red,scale=0.7]{};
 \node (x13) at (10+2,-2+3) [draw,circle,color=red,fill=red,scale=0.7]{};
 \node (x24) at (10+2,0+3) [draw,circle,color=red,fill=red,scale=0.7]{};
 \node at (8+1,-1){$\rightarrow$};

 \node (y13) at (10+2,1-3) [draw,circle,color=red,fill=red,scale=0.7]{};
 \node (y2) at (9+2,3-3) [draw,circle,color=red,fill=red,scale=0.7] {};
 \node (y3) at (11+2,1-3) [draw,circle,color=red,fill=red,scale=0.7] {};
 \node (y4) at (11+2,3-3) [draw,circle,color=red,fill=red,scale=0.7] {}; 
\node at (8+1,2){$\rightarrow$};

 \tikzstyle{every node}=[draw,circle,fill=white,minimum size=3pt,inner sep=1pt]
 \draw (-1-1,0) node (v)[]  {};
 \draw (0-1,0) node (u) [] {$u$};
 \draw (1-1,0) node (w) [] {};
 \draw (0-1,1) node (z) [] {};
 \draw (-2-1,-0.5) node (v2)[]  {};
 \draw (2-1,-0.5) node (w2)[]  {};
 \draw[ultra thick,color=blue] (u) -- (v);
 \draw[ultra thick,color=blue] (u) -- (w);
 \draw (u) -- (z);
 \draw (v) -- (v1);
 \draw[ultra thick,color=blue] (v) -- (v2);
 \draw (w) -- (w1);
 \draw[ultra thick,color=blue] (w) -- (w2);
 \draw (z) -- (z1);
 \draw (z) -- (z2);

 \draw (5,-1+3) node (u12) [] {};
 \draw (7,-1+3) node (u34) [] {};
 \draw (6,-2+3) node (u13) [] {};
 \draw (6,0+3) node (u24) [] {};
 \draw (5.8,-1+3) node (u14) [] {};
 \draw (6.2,-1+3) node (u23) [] {};
\draw (u1) -- (u12);
\draw (u2) -- (u12);
\draw (u3) -- (u34);
\draw (u4) -- (u34);
\draw (u1) -- (u13);
\draw (u1) -- (u14);
\draw (u2) -- (u23);
\draw (u2) -- (u24);
\draw (u3) -- (u13);
\draw (u3) -- (u23);
\draw (u4) -- (u14);
\draw (u4) -- (u24);

 \draw (7,2-3) node (i34) [] {};
 \draw (6,1-3) node (i13) [] {};
 \draw (6,3-3) node (i24) [] {};
 \draw (5.8,2-3) node (i14) [] {};
 \draw (6.2,2-3) node (i23) [] {};
\draw (i1) -- (i2);
\draw (i3) -- (i34);
\draw (i4) -- (i34);
\draw (i1) -- (i13);
\draw (i1) -- (i14);
\draw (i2) -- (i23);
\draw (i2) -- (i24);
\draw (i3) -- (i13);
\draw (i3) -- (i23);
\draw (i4) -- (i14);
\draw (i4) -- (i24);

 \draw (9+2,-2+3) node (x1) [] {};
 \draw (9+2,0+3) node (x2) [] {};
 \draw (11+2,-2+3) node (x3) [] {};
 \draw (11+2,0+3) node (x4) [] {};
 \draw (9.8+2,-1+3) node (x14) [] {};
 \draw (10.2+2,-1+3) node (x23) [] {};
\draw (x1) -- (x12);
\draw (x2) -- (x12);
\draw (x3) -- (x34);
\draw (x4) -- (x34);
\draw (x1) -- (x13);
\draw (x1) -- (x14);
\draw (x2) -- (x23);
\draw (x2) -- (x24);
\draw (x3) -- (x13);
\draw (x3) -- (x23);
\draw (x4) -- (x14);
\draw (x4) -- (x24);
 \draw (9+2,1-3) node (y1) [] {};
 \draw (9.8+2,2-3) node (y14) [] {};
 \draw (10.2+2,2-3) node (y23) [] {};
 \draw (11+2,2-3) node (y34) {};
 \draw (10+2,3-3) node  (y24){};
\draw (y1) -- (y2);
\draw (y3) -- (y34);
\draw (y4) -- (y34);
\draw (y1) -- (y13);
\draw (y1) -- (y14);
\draw (y2) -- (y23);
\draw (y2) -- (y24);
\draw (y3) -- (y13);
\draw (y3) -- (y23);
\draw (y4) -- (y14);
\draw (y4) -- (y24);
 \end{tikzpicture}
 \end{center}
\caption{\label{f1p11222a} A vertex $u\in X$ satisfying $d_{G_X}(u)=4$ (on the left) and four vertices of $X$ forming an induced $K_4$ in $G_X$ (on the middle) and their corresponding vertices in $Y$ (on the right) (bold vertices: vertices from $X$ (on the left and middle) or $Y$ (on the right), thick edges: edges belonging to $C_u$).}
\end{figure}
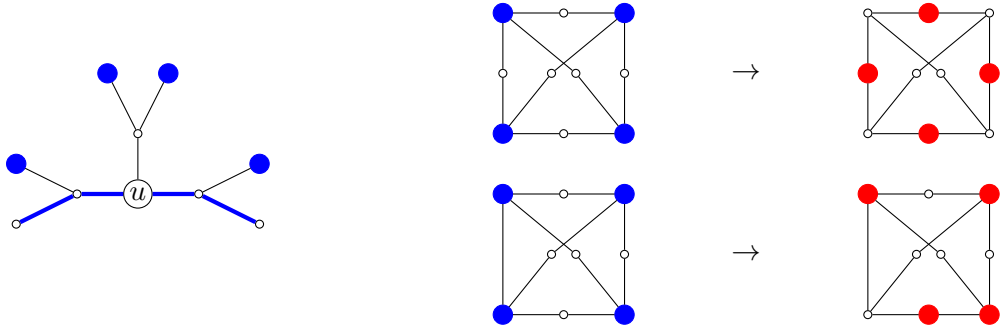

Now suppose that $X$ is an odd-cut set in $G$ of minimum cardinality minimizing $|E(G_X)|$.
We consider two types of induced $K_4$ in $G_X$. The first type is a $K_4$ such that $U$ is an independent set in $G$ and that both $(u_{12},u_{34})$ and $(u_{14},u_{23})$ are pairs of vertices with no common neighbor.
The second type is a $K_4$ such that $u_1$ and $u_2$ are adjacent in $G$.
By Properties (B2), (B3) and (B6), every induced $K_4$ in $G_X$ is isomorphic to an induced $K_4$ of the first or second type. 
We construct $Y$ by replacing $U$ by $\{u_{12},u_{23},u_{34},u_{41}\}$ in $X$, each time there is an induced $K_4$ of the first type in $G_X$ and by replacing $U$ by $\{u_{2},u_{3},u_{4},u_{13}\}$ in $X$, each time there is an induced $K_4$ of the second type in $G_X$. The middle part of Figure \ref{f1p11222a} illustrates the two types of induced $K_4$ in $G_X$ and the right part illustrates the corresponding vertices in $Y$. Notice that, by Properties (B4) and (B5), the graph $G_Y$ has no more induced $K_4$.

Now, it remains to prove that $Y$ is an odd-cut set and that $G_Y$ is subcubic.
Suppose there is an induced $K_4$ in $G_X$.
If this $K_4$ is of the first type, then, since the set $\{u_{12},u_{23},u_{34},u_{41}\}$ contains two neighbors of each vertex from $U$, we obtain that if $u_i$ belongs to an odd cycle, then at least one vertex from $\{u_{12},u_{23},u_{34},u_{41}\}$ belongs to this odd cycle. If this $K_4$ is of the second type, notice that if $u_1$ belongs to an odd cycle, then at least one vertex from $\{u_2,u_{13}\}$ belongs to this odd cycle.
Consequently, $Y$ is an odd-cut set and since $X$ has the same cardinality than $Y$, $Y$ has minimum cardinality and satisfies Property (A4). Moreover, by Properties (B4) and (B5), we obtain that no vertex of $Y\setminus X$ has a common neighbor with two other vertices of $Y$ and consequently that $Y$ satisfies Property (B1). Hence, by Property (A4), we obtain that $G_{Y}$ is subcubic and has no connected component isomorphic to $K_4$, which concludes the proof.

\end{proof}

Remark that in the previous proof, except in the proof of Property (B6), we did not use the fact that the cycles have odd length. Thus, using a similar proof, we can possibly obtain that for any subcubic graph $G$, there exists a minimum cut set that can be partitioned in three disjoint $2$-packings of $G$.

The Petersen graph is an example of cubic graph which is not $(1,1,k,k')$-colorable, for any $k,k'\ge2$, showing that the result of Theorem~\ref{p13} is tight in a certain sense.

\begin{prop}
For any $k,k'\ge2$, the Petersen graph is not $(1,1,k,k')$-colorable.
\end{prop}
\begin{proof}
Since the diameter of the Petersen graph is $2$, for any $k\ge 2$, a color $k$ can be used to color only one vertex.
Hence, it suffices to prove that we can color at most seven vertices with the two colors $1$.
A maximum independent set in the Petersen graph contains at most four vertices and an independent set of four vertices is made of two vertices of the outer cycle and two vertices of the inner cycle of the Petersen graph.
We can remark that for every possible independent set of four vertices $A$, there are at most three non-adjacent vertices not belonging to $A$.
\end{proof}

The experiments suggest that the Petersen graph could be the only non $(1,1,2,3)$-colorable subcubic graph, see Table~\ref{t1}. Furthermore, the next result shows that the three colors 2 cannot be replaced by three colors 3 in Theorem~\ref{p13}.
\begin{prop}
There exist cubic graphs that are not $(1,1,3,3,3)$-colorable.
\end{prop}
\begin{proof}
Consider the cubic graph depicted in Figure~\ref{f1p16}.
Since it has diameter $3$, no more than one vertex could be colored by a color $3$.
Moreover, it contains four disjoint triangles and each triangle should contain one vertex not colored by $1$.
Thus, it is impossible to color it with the sequence $(1,1,3,3,3)$.
\end{proof}

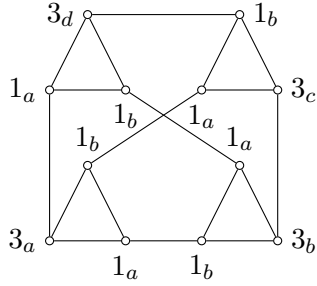
\begin{figure}
\begin{center}
\begin{tikzpicture}
\tikzstyle{every node}=[draw,circle,fill=white,minimum size=3pt,inner sep=1pt]
\draw (0,0) node (x1)[label=left:$3_a$]  {};
\draw (1,0) node (x2) [label=south:$1_a$] {};
\draw (2,0) node (y1) [label=south:$1_b$] {};
\draw (3,0) node (y2)[label=right:$3_b$]  {};
\draw (0.5,1) node (x3)[label=north:$1_b$]   {};
\draw (2.5,1) node (y3)[label=north:$1_a$]  {};
\draw (0,2) node (v1)[label=left:$1_a$]  {};
\draw (1,2) node (v2) [label=south:$1_b$] {};
\draw (2,2) node (w1) [label=south:$1_a$] {};
\draw (3,2) node (w2)[label=right:$3_c$]  {};
\draw (0.5,3) node (v3)[label=left:$3_d$]   {};
\draw (2.5,3) node (w3)[label=right:$1_b$]  {};
\draw (x1) -- (x2);
\draw (x2) -- (x3);
\draw (x3) -- (x1);
\draw (y1) -- (y2);
\draw (y2) -- (y3);
\draw (y3) -- (y1);
\draw (v1) -- (v2);
\draw (v2) -- (v3);
\draw (v3) -- (v1);
\draw (w1) -- (w2);
\draw (w2) -- (w3);
\draw (w3) -- (w1);
\draw (x2) -- (y1);
\draw (x1) -- (v1);
\draw (y3) -- (v2);
\draw (y2) -- (w2);
\draw (x3) -- (w1);
\draw (w3) -- (v3);

\end{tikzpicture}
\end{center}
\caption{\label{f1p16} A cubic non $(1,1,3,3,3)$-colorable graph of order 12.}
\end{figure}

\begin{table}[!ht]
\centering\begin{tabular}{c|cccc}
$n\backslash S$ &$(1,1)$&$(1,1,2)$&$(1,1,2,3)$&$(1,1,2,3,3)$\\\hline
4&  0 & 0 & 1 & 0\\
6&  1 & 0 & 1 & 0\\
8&  1&  2&  2&  0\\
10&  2&  9&  7&  1\\
12&  5&  42&  38&  0\\
14&  13&  314&  182&  0\\
16&  38&  2808&  1214&  0\\
18&  149&  32766&  8386&  0\\
20&  703&  423338&  86448&  0\\
22&  4132&  6212201&  1103114&  0\\
\end{tabular}
\caption{\label{t1}Number of $S$-chromatic cubic graphs of order $n$ up to 22.}
\end{table}

We now show that 3-irregular subcubic graphs are $(1,1,2)$-colorable. Notice that the subdivided graph $S(G)$ of any graph $G$ is $(1,1)$-colorable as it is bipartite.
% \begin{Remark}\label{r14}
% For every graph $G$, $S(G)$ is $(1,1)$-colorable, as $S(G)$ is bipartite. 
% \end{Remark}
\begin{Theorem}\label{p14}
Every 3-irregular subcubic graph is $(1,1,2)$-colorable. 
\end{Theorem}
\begin{proof}
Let $G$ be a 3-irregular graph and let $e=xy$ be any edge of $G$ such that $x$ and $y$ both have degree at most $2$.
If no such edge exist then the graph is bipartite and consequently $(1,1)$-colorable.

Define a level ordering $L_i$, $0\le i \le r=\epsilon(e)$, of $(G,e)$.

We first construct a coloring $c$ of the vertices of $G$ from level $r$ to $1$ and with colors from the set $\{1_a,1_b,2\}$, that satisfies the following property:
\begin{enumerate}
\item[i)] No vertex of degree at most 2 is colored 2.
\end{enumerate}

The set $L_{r}$ induces a disjoint union of paths of order at most 3 in $G$. Since paths are $(1,1)$-colorable, $L_{r}$ is $(1,1)$-colorable. Thus, Property i) is satisfied.

Assume that we have already colored all vertices of $G$ of levels from $r$ to $i+1$ and that we are going to color vertex $u\in L_i$, $1\le i\le r-1$. 
If $u$ has degree at most $2$, then $\{1_a,1_b\}\not \subseteq C_1(u)$. Hence, $u$ can be colored by $1_a$ or $1_b$ and Property i) is satisfied.
If $u$ has degree $3$ and if $C_1(u)\neq \{1_a,1_b\}$, then $u$ can be colored by $1_a$ or $1_b$.
Else if $C_1(u)=\{1_a,1_b\}$, let $u_1$ and $u_2$ be the colored neighbors of $u$, with $c(u_1)=1_a$ and $c(u_2)=1_b$.
The vertex $u_1$ has a neighbor colored by $1_b$ and the vertex $u_2$ has a neighbor colored by $1_a$, if not $u_1$ and $u_2$ could be recolored and $u$ could be colored by $1_a$ or $1_b$.
Thus, we have $C_1(u)\cup C_2(u)=\{1_a,1_b\}$ because $G$ is 3-irregular and we can color $u$ by the color $2$.

Finally, it remains to color vertices of $L_0$, i.e. $x$ and $y$. If $1_a\not\in C_1(x)$ and $1_b\not\in C_1(y)$ (or, symmetrically, $1_b\not\in C_1(x)$ and $1_a\not\in C_1(y)$). Then set $c(x)=1_a$ and $c(y)=1_b$ (or, symmetrically, $c(x)=1_b$ and $c(y)=1_a$). 
Let $x_1$ be the possible neighbor of $x$ different from $y$ and let $y_1$ be the possible neighbor of $y$ different from $x$.
Without loss of generality, suppose that $c(x_1)=1_a$ and $c(y_1)=1_a$.
Suppose that $x_1$ has degree at most 2. If $2\in C_1(x_1)$, then $x_1$ can be recolored by $1_b$ and we can set $c(x)=1_a$ and $c(y)=1_b$.
Else, $C_1(x_1)=\{1_b\}$ and we can set $c(x)=2$ and $c(y)=1_b$. If $x_1$ has degree 3, then every colored neighbor of $x$ has at most degree 2 and is colored by $1_b$ by Property i).
Thus, since $2\notin C_1(x_1)$, we can set $c(x)=2$ and $c(y)=1_b$.
Therefore, we obtain a $(1,1,2)$-coloring of $G$.
\end{proof}

\section{$(1,2,3,\ldots)$-coloring}
The question of whether cubic graphs have finite packing chromatic number or not was raised by Goddard et al.~\cite{GoBro}.
We give some partial results related to this question.

For the subdivision of a cubic graph,  Proposition~\ref{p01} implies that if every subcubic graph $G$ different from the Petersen graph is $(1,1,2,2)$-colorable, then $S(G)$ is $(1,3,3,5,5)$-colorable and consequently $\chi_\rho(S(G))\le 5$. On the other side, it can be easily verified that $\chi_\rho(S(K_4))=5$.

\begin{figure}
\begin{center}
\begin{tikzpicture}
\tikzstyle{every node}=[draw,circle,fill=white,minimum size=3pt,inner sep=1pt]
\draw (0,0) node (x1)[] {};
\draw (1.2,0) node (x10)[] {};
\draw (0.3,0.6) node (x2)[] {};
\draw (0.8,1) node (x3)[] {};
\draw (1.6,1) node (x4)[] {};
\draw (2.4,0.3) node (x5)[] {};

\draw (0.3,-0.6) node (x6)[] {};
\draw (0.8,-1) node (x7)[] {};
\draw (1.6,-1) node (x8)[] {};
\draw (2.4,-0.3) node (x9)[] {};

\draw (3,0.3) node (y1)[] {};
\draw (3.3+0.2,1.2) node (y2)[] {};
\draw (3.8+0.4,1.5) node (y3)[] {};
\draw (4.2+0.6,1.7) node (y4)[] {};
\draw (4.6+0.8,1.5) node (y5)[] {};
\draw (5.1+1,1.2) node (y6)[] {};
\draw (5.4+1.2,0.3) node (y7)[] {};

\draw (3,-0.3) node (y8)[] {};
\draw (3.3+0.2,-1.2) node (y9)[] {};
\draw (3.8+0.4,-1.5) node (y10)[] {};
\draw (4.2+0.6,-1.7) node (y11)[] {};
\draw (4.6+0.8,-1.5) node (y12)[] {};
\draw (5.1+1,-1.2) node (y13)[] {};
\draw (5.4+1.2,-0.3) node (y14)[] {};

\draw (x1) -- (x10);
\draw (x1) -- (x2);
\draw (x2) -- (x3);
\draw (x3) -- (x4);
\draw (x4) -- (x5);
\draw (x1) -- (x6);
\draw (x6) -- (x7);
\draw (x7) -- (x8);
\draw (x8) -- (x9);
\draw (x2) -- (x9);
\draw (x6) -- (x5);
\draw (x3) -- (x7);
\draw (x10) -- (x8);
\draw (x10) -- (x4);

\draw (x5) -- (y1);
\draw (y1) -- (y2);
\draw (y2) -- (y3);
\draw (y3) -- (y4);
\draw (y4) -- (y5);
\draw (y5) -- (y6);
\draw (y6) -- (y7);
\draw (y7) -- (y14);
\draw (y8) -- (y9);
\draw (y9) -- (y10);
\draw (y10) -- (y11);
\draw (y11) -- (y12);
\draw (y12) -- (y13);
\draw (y13) -- (y14);
\draw (x9) -- (y8);

\draw (y1) -- (y5);
\draw (y2) -- (y7);
\draw (y3) -- (y13);
\draw (y4) -- (y11);
\draw (y6) -- (y10);
\draw (y9) -- (y14);
\draw (y8) -- (y12);

\end{tikzpicture}
\end{center}
\caption{\label{f1p17} A cubic graph of order $24$ with packing chromatic number 11.}
\end{figure}
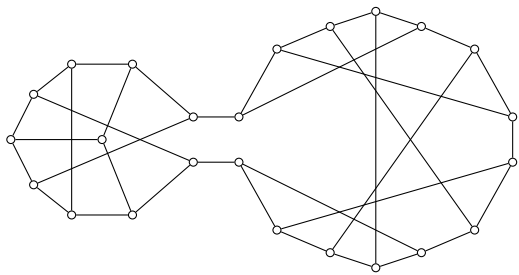

For arbitrary cubic graphs, we can (only) state the following:
\begin{prop}There exists a cubic graph with packing chromatic number $13$.
\end{prop}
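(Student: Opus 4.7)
The plan is to exhibit an explicit cubic graph $G$ on $38$ vertices, prove that no $(1,2,\ldots,12)$-packing coloring of $G$ exists, and then display an explicit $(1,2,\ldots,13)$-packing coloring of $G$. The upper bound $\chi_\rho(G)\le 13$ is the easy half: once a suitable candidate $G$ is produced, one simply writes down an assignment $c:V(G)\to\{1,\ldots,13\}$ and verifies, color by color, that the vertices receiving color $i$ form an $i$-packing. This is a finite check.

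The hard part, and the main obstacle, is the lower bound $\chi_\rho(G)\ge 13$. The approach I would use is a density/counting argument combined with structural case analysis. For each color $i\in\{1,\ldots,12\}$, an $i$-packing in a cubic graph $G$ has cardinality at most $|V(G)|/b_i$, where $b_i$ denotes the minimum, over $v\in V(G)$, of the ball volume $|B(v,\lfloor i/2\rfloor)|$. Since $G$ is cubic, in a reasonably dense graph one expects $|B(v,r)|$ to grow roughly like $1+3+6+\cdots$, so already the color classes for $i\ge 5$ can each contain only very few vertices. The strategy is therefore to choose $G$ so that
\begin{enumerate}
\item[(a)] the diameter is small enough that the classes for large $i$ contain at most one vertex, and
\item[(b)] for each small $i$, the maximum $i$-packing has a size so tightly constrained that the sum $\sum_{i=1}^{12}\alpha_i(G)$, where $\alpha_i(G)$ denotes the maximum size of an $i$-packing, falls short of $|V(G)|=38$.
\end{enumerate}
If (a) and (b) can both be arranged, then no partition of $V(G)$ into packings of types $1,2,\ldots,12$ can exist.

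The first concrete step is to compute $\alpha_i(G)$ for $i=1,\ldots,12$ on the chosen graph. These maximum $i$-packing sizes are standard integer-programming quantities for a fixed small graph, and can be determined by branch-and-bound or by direct inspection exploiting symmetries of $G$. The argument then reduces to verifying the arithmetic inequality $\sum_{i=1}^{12}\alpha_i(G)<38$, which certifies that no $(1,2,\ldots,12)$-coloring exists. Because the packing classes must actually partition $V(G)$ (not just be chosen independently), the cleanest presentation is to argue that even the looser inequality $\alpha_1(G)+\alpha_2(G)+\sum_{i\ge 3}\lfloor 38/b_i\rfloor<38$ holds; this avoids any interaction arguments between different color classes.

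I anticipate the delicate step will be finding a cubic graph of order as small as $38$ on which these bounds are simultaneously tight; the natural candidates are generalized Petersen graphs, cages, or other highly symmetric cubic graphs with small diameter and no large independent sets. Once such a $G$ is fixed, both halves (the counting lower bound and the explicit $13$-coloring) are verifications that can, if needed, be performed by a short computer search; the proof in the paper is expected to exhibit the graph and the two certificates and leave the routine checks to the reader.
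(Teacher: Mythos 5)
Your high-level strategy is the same as the paper's: pick a cubic graph of order $38$ with very small diameter so that every color $i\ge 4$ can be used on at most one vertex, and then show by a counting argument that the small colors cannot cover enough of the remaining vertices. The graph the paper uses is the (unique, largest) cubic graph of order $38$ and diameter $4$ of Alegre--Fiol--Yebra and von~Conta; with diameter $4$, each of the nine colors $4,\ldots,12$ covers at most one vertex, so a $(1,2,\ldots,12)$-coloring would have to place colors $1,2,3$ on at least $38-9=29$ vertices, and the computer check shows at most $28$ vertices can be so colored. The upper bound $\chi_\rho=13$ is likewise verified by computer.

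The one substantive point where your plan diverges --- and where it risks failing --- is your ``cleanest presentation'' via the relaxation $\alpha_1(G)+\alpha_2(G)+\sum_{i\ge 3}\lfloor 38/b_i\rfloor<38$, i.e.\ summing \emph{individual} maximum packing sizes to avoid interaction arguments between color classes. The paper's certificate is genuinely a \emph{joint} optimization: the maximum, over disjoint triples $(X_1,X_2,X_3)$ with $X_i$ an $i$-packing, of $|X_1|+|X_2|+|X_3|$, which equals $28$. The sum $\alpha_1(G)+\alpha_2(G)+\alpha_3(G)$ of the separate maxima can easily exceed $28$ (for this graph $\alpha_1$ is on the order of $15$ and the ball-volume bound only gives $\alpha_2\le 9$, while a $3$-packing may still contain several vertices since distance $4$ equals the diameter), so the decoupled inequality need not hold and cannot simply be substituted for the joint computation. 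Your proposal also stops short of naming a candidate graph or carrying out either verification, so as written it is a correct outline of the paper's method rather than a proof; to complete it you would need to commit to the diameter-$4$ graph of order $38$ and perform the joint brute-force search exactly as the paper does.
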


\begin{proof}
The cubic graph of order $38$ and diameter 4 (which is a largest cubic graph with diameter 4) described independently in~\cite{AFY86,vonC} needs 13 colors to be packing colored (checked by computer). By running a brute force search algorithm, we found that at most $28$ vertices can be colored with colors $\{1,2,3\}$. But, since this graph has diameter 4, then every color greater than 3 can be given to only one vertex, implying the use of all colors from $\{4,\ldots,13\}$ to complete the coloring.
\end{proof}

The distribution of packing chromatic numbers for cubic graphs of order up to $20$ is presented in Table~\ref{tb3}.
With the help of a computer, we also found a cubic graph of order $24$ and packing chromatic number $11$.
This graph is illustrated in Figure~\ref{f1p17}.
%The values were obtained by an exhaustive search, using the lists of cubic graphs maintained by Gordon Royle~\cite{GoCu}.

\begin{table}[!ht]
\centering\begin{tabular}{c|cccccccc}
$n$ $\backslash$ $\chi_{\rho}$ & 4 & 5 & 6 & 7 & 8 & 9 & 10 & 11\\\hline
4  & 1 & 0  & 0   & 0 & 0 & 0 & 0 & 0\\
6  & 1 & 1  & 0   & 0 & 0 & 0 & 0 & 0\\
8  & 0 & 3  & 2   & 0 & 0 & 0 & 0 & 0\\
10 & 0 & 3  & 15  & 1   & 0 & 0 & 0 & 0\\
12 & 0 & 7  & 42  & 36  & 0 & 0 & 0 & 0\\
14 & 0 & 13  & 252 & 222 & 22 & 0 & 0 & 0\\
16 & 0 & 34 & 907 & 2685 &  433 & 1 & 0 & 0\\
18 & 0 & 116 & 5277 & 21544 & 14050 &  314 & 0 & 0\\
20 & 0 & 151	& 22098	& 206334& 226622 & \multicolumn{2}{c}{55284$^*$}&0\\
%22 & 0 & 	&	&	&	&	&	&\\
\end{tabular}
\caption{\label{tb3}Number of cubic graphs of order $n$ with packing chromatic number $\chi_{\rho}$ up to 20.$^*$There are 55284 cubic graphs of order 20 and with packing chromatic number between 9 and 10 (our program takes too long time to compute their packing chromatic numbers).}
\end{table}

%The problem of the existence of a cubic graph with packing chromatic number larger than $13$ is left as an open question.
% \begin{prop}  $\chi_{\rho}(\mathcal{C}_{38})\ge 13$.
% %Then cubic graph depicted in Figure~\ref{fn24} satisfies $\chi_{\rho}= 11$.
% \end{prop}
% 
% \begin{proof}
% The cubic graph of order $38$ and diameter 4 (which is the largest cubic graph with diameter 4) described independently in~\cite{vonC,AFY86} needs 13 colors to be packed colored (checked by computer). Note that, since this graph has diameter 4, then every color greater than 3 can be given to only one vertex.
% \end{proof}

\section{Concluding remarks}
We conclude this paper by listing a few open problems:
\begin{itemize}
\item Is it true that any subcubic graph except the Petersen graph is $(1,1,2,3)$-colorable?
\item Is it true that any subcubic graph except the Petersen graph is $(1,2,2,2,2,2)$-colorable?
\item Does there exist a 3-irregular subcubic graph that is not $(1,2,2,3)$-colorable?
\item Is it true that any 3-irregular subcubic graph is $(1,1,3)$-colorable?
\item Is it true that the subdivision of any subcubic graph is $(1,2,3,4,5)$-colorable?
\item Does there exist a cubic graph with packing chromatic number larger than 13?
\end{itemize}
\section*{Acknowledgments} 
First author was partially supported by the Burgundy Council under grant \#CRB2011-9201AAO048S05587.

%\bibliographystyle{plain}
%\bibliography{../../Biblio/color}

\end{document}